\def\colorful{1}
\newif\ifhyper\IfFileExists{hyperref.sty}{\hypertrue}{\hyperfalse}
\ifhyper\usepackage{hyperref}\fi
\renewcommand{\section}{\@startsection{section}{1}{0pt}{-12pt}{5pt}{\large\bf}}
\renewcommand{\subsection}{\@startsection{subsection}{2}{0pt}{-12pt}{-5pt}{\normalsize\bf}}
\renewcommand{\subsubsection}{\@startsection{subsubsection}{3}{0pt}{-12pt}{-5pt}{\normalsize\bf}}
\def\nnewcolor{1}
\newcommand{\new}[1]{{\color{black} #1}}
\newcommand{\new}[1]{{#1}}
\newtheorem{theorem}{Theorem}
\newtheorem{lemma}[theorem]{Lemma}
\newtheorem{proposition}[theorem]{Proposition}
\theoremstyle{definition}
\newcommand{\E}{\mathbb{E}}
\newcommand{\var}{\mathrm{Var}}
\DeclareMathOperator{\cov}{Cov}
\newcommand{\ignore}[1]{}
\newcommand{\eps}{\epsilon}
\newcommand{\Var}{\mathop{\textnormal{Var}}\nolimits}
\newcommand{\Poi}{\mathop{\textnormal{Poi}}\nolimits}
\newcommand{\littlesum}{\mathop{\textstyle \sum}}
\newenvironment{algorithm}[1][\  ] %
{ \rm
\begin{tabbing}
....\=.....\=.....\=.....\=.....\=  \+ \kill
} %
{\end{tabbing} }
\title{Collision-based Testers are Optimal for Uniformity and Closeness}
\author{
Ilias Diakonikolas\thanks{Part of this research was performed when the author was at the University of Edinburgh, 
and while visiting MIT.  Supported in part by a Marie Curie Career Integration grant and an EPSRC grant.}\\
USC\\
\and
Themis Gouleakis\thanks{This material is based upon work supported by the NSF under Grant No. 1420692.}\\
MIT\\
\and
John Peebles\thanks{This material is based upon work supported by the NSF 
Graduate Research Fellowship under Grant No. 1122374, and by the NSF under Grant No. 1065125.}\\
MIT\\
\and
Eric Price\\
UT Austin
}
\begin{document}

\maketitle

\thispagestyle{empty}

\begin{abstract}
We study the fundamental problems of (i) uniformity testing of a discrete distribution, 
and (ii) closeness testing between two discrete distributions with bounded $\ell_2$-norm.  
These problems have been extensively studied in distribution testing
and sample-optimal estimators are known for them~\cite{Paninski:08, CDVV14, VV14, DKN:15}.

In this work, we show that the original collision-based testers proposed for these problems
~\cite{GRdist:00, BFR+:00} are sample-optimal, up to constant factors. 
Previous analyses showed sample complexity upper bounds for these testers that are optimal
as a function of the domain size $n$, but suboptimal by polynomial factors 
in the error parameter $\eps$. Our main contribution is a new tight analysis 
establishing that these collision-based testers are information-theoretically optimal, 
up to constant factors, both in the dependence on $n$ and in the dependence on $\eps$.
\end{abstract}

\thispagestyle{empty}

\section{Introduction}  \label{sec:intro}

\subsection{Background and Our Results} \label{ssec:background}
The generic inference problem in {\em distribution property testing}~\cite{BFR+:00, Batu13} 
(also see, e.g., \cite{Rub12, Canonne15, Goldreich16-notes}) 
is the following: given sample access to one or more unknown distributions,
determine whether they satisfy some global property or are ``far''
from satisfying the property. During the past couple of decades, distribution testing
-- whose roots lie in statistical hypothesis testing~\cite{NeymanP, lehmann2005testing} -- has developed into a mature field.
One of the most fundamental tasks in this field
is deciding whether an unknown discrete distribution is approximately uniform on its domain, known as the problem of 
{\em uniformity testing}. Formally, we want to design an algorithm that, 
given independent samples from a discrete distribution $p$ over $[n]$ and a parameter $\eps>0$, 
distinguishes (with high probability) the case that $p$ is uniform 
from the case that $p$ is $\eps$-far from uniform, i.e., the total variation distance between $p$ and the uniform distribution
over $[n]$ is at least $\eps$. 

Uniformity testing was the very first problem considered in this line of work:
Goldreich and Ron~\cite{GRdist:00}, motivated by the question of testing the expansion of graphs,
proposed a simple and natural uniformity tester that relies on the {\em collision probability} 
of the unknown distribution. The collision probability of a discrete distribution $p$ is the probability that two
samples drawn according to $p$ are equal. The key intuition here is that the uniform distribution
has the minimum collision probability among all distributions on the same domain, and that any distribution
that is $\eps$-far from uniform has noticeably larger collision probability. Formalizing this intuition, 
Goldreich and Ron~\cite{GRdist:00} showed that the collision-based uniformity tester succeeds
after drawing $O(n^{1/2}/\eps^4)$ samples from the unknown distribution. An information-theoretic lower
bound of $\Omega(n^{1/2})$ on the number of samples required by any uniformity tester follows 
from a simple birthday-paradox argument~\cite{GRdist:00, BFFKRW:01}, 
even for constant values of the parameter $\eps$. In subsequent work, 
Paninski~\cite{Paninski:08} showed an information-theoretic lower bound of $\Omega(n^{1/2}/\eps^2)$,
and also provided a matching upper bound of $O(n^{1/2}/\eps^2)$ that holds under the assumption that 
$\eps = \Omega(n^{-1/4})$\footnote{The uniformity tester of~\cite{Paninski:08} relies 
on the number of {\em unique elements}, i.e., the elements
that appear in the sample set exactly once. Such a tester is only meaningful in the regime that 
the total number of samples is smaller than the domain size.}. This lower bound assumption
on $\eps$ is not inherent: As shown in a number of recent works~\cite{VV14, DKN:15} (see also~\cite{Orlitsky:colt12, CDVV14}), 
a variant of Pearson's $\chi^2$-tester can test uniformity with  $O(n^{1/2}/\eps^2)$ samples for all 
values of $n, \eps>0$. The ``chi-squared type'' testers of~\cite{CDVV14, VV14} 
are simple, but are also arguably slightly less natural than the original collision-based uniformity tester~\cite{GRdist:00}. 

Perhaps surprisingly, prior to this work, the sample complexity of the collision uniformity tester 
was not fully understood. In particular, it was not known whether the sample upper bound 
of $O(n^{1/2}/\eps^4)$ -- established in~\cite{GRdist:00} --  is tight for this tester, 
or there exists an improved analysis that can give a better upper bound.
As our first main contribution (Theorem~\ref{thm:collisions-unif}), 
we provide a new analysis of  the collision uniformity tester establishing a tight 
$O(n^{1/2}/\eps^2)$ upper bound on its sample complexity. That is, we show 
that the originally proposed uniformity tester is in fact sample-optimal, up to constant factors.

A related testing problem of central importance in the field is the following: 
Given samples from two unknown distributions $p, q$ over $[n]$
with the promise that $\max\{ \|p\|_2^2, \|q\|_2^2\} \leq b$, distinguish between the cases that $\|p-q\|_2 \leq \eps/2$ 
and $\|p-q\|_2 \ge \eps.$ That is, we want to test the closeness between two unknown distributions with 
small $\ell_2$-norm. (We remark here that the assumption that both $p$ and $q$ have small $\ell_2$-norm is critical in this context.)
The seminal work of Batu {\em et al.}~\cite{BFR+:00} gave a collision-based tester for this problem
that uses $O(b^2/\eps^4 + b^{1/2}/\eps^2)$ samples. Subsequent work by Chan, Diakonikolas, Valiant, and Valiant~\cite{CDVV14}
gave a different ``chi-squared type'' tester that uses $O(b^{1/2}/\eps^2)$; this sample bound was shown~\cite{CDVV14, VV14} 
to be optimal, up to constant factors. 

Similarly to the case of uniformity testing, prior to this work, it was not known whether the analysis of the 
collision-based closeness tester in~\cite{BFR+:00} is tight. As our second contribution, 
we show (Theorem~\ref{thm:collisions-closeness}) that (essentially) 
the collision-based tester of~\cite{BFR+:00} succeeds with $O(b^{1/2}/\eps^2)$ samples,  
i.e., it is sample-optimal, up to constants, for the corresponding problem.

\paragraph{Remark.}
Uniformity testing has been a useful algorithmic primitive for 
several other distribution testing problems as well~\cite{BFFKRW:01, DDSVV13, DKN:15, DKN:15:FOCS, 
CDGR16, Goldreich16}. Notably, Goldreich~\cite{Goldreich16} 
recently showed that the more general problem of testing
the identity of any explicitly given distribution can be reduced to uniformity testing 
with only a constant factor loss in sample complexity. 

The problem of $\ell_2$ closeness testing for distributions with small $\ell_2$ norm 
has been identified as an important algorithmic primitive 
since the original work of Batu {\em et al.}~\cite{BFR+:00} who exploited it to obtain the first $\ell_1$ closeness tester. 
Recently, Diakonikolas and Kane~\cite{DK16} gave a collection of reductions from various distribution testing problems to the above $\ell_2$ closeness 
testing problem. The approach of~\cite{DK16} shows that one can obtain sample-optimal testers for a range of different properties of distributions
by applying an optimal tester for the above problem as a black-box.



\subsection{Overview of Analysis} \label{sec:overview}
We now provide a brief summary of previous analyses and a comparison with our work.
The canonical way to construct and analyze distribution property testers roughly works as follows:
Given $m$ independent samples $s_1, \ldots, s_m$ from our distribution(s), we consider an appropriate
random variable (statistic) $F(s_1, \ldots, s_m)$. If $F(s_1, \ldots, s_m)$ exceeds an appropriately defined threshold $T$,
our tester rejects; otherwise, it accepts. The canonical analysis proceeds by bounding 
the expectation and variance of $F$ for the case that the distribution(s) satisfy the property (completeness),
and the case they are $\eps$-far from satisfying the property (soundness), followed by an application 
of Chebyshev's inequality.

The main difficulty is choosing the statistic $F$ appropriately so that the expectations for the completeness
and soundness cases are sufficiently separated after a small number of samples, 
and at the same time the variance of the statistic is not ``too large''. Typically, the challenging step
in the analysis is bounding from above the variance of $F$ in the soundness case. 
Our analysis follows this standard framework.
Roughly speaking, for both problems we consider, 
we provide a tighter analysis of the variance of the corresponding estimators,
that in turn leads to the optimal sample complexity upper bound.

More specifically, for the case of uniformity testing, the argument of~\cite{GRdist:00} proceeds by 
showing that the collision tester yields a  $(1+\gamma)$-multiplicative approximation of the $\ell_2^2$-norm
of the unknown distribution with $O(n^{1/2}/\gamma^2)$ samples. Setting $\gamma = \eps^2$ gives a uniformity 
testing under the $\ell_1$ distance that uses  $O(n^{1/2}/\eps^4)$ samples. We note that the quadratic 
dependence on $1/\gamma$ in the multiplicative approximation of the $\ell_2$ norm is tight in general.
(For an easy example, consider the case that our distribution is either uniform over two elements, 
or assigns probability mass $1/2-\gamma, 1/2+\gamma$ to the elements.) 
Roughly speaking, we show that we can do better when the $\ell_2$ norm of the distribution in question is small.
More specifically, the collision uniformity tester can distinguish between the case that $\|p\|_2^2 \leq (1+\gamma/2)/n$ and 
$\|p\|_2^2 \geq (1+\gamma)/n$ with $O(n^{1/2}/\gamma)$ samples. This immediately yields the desired $\ell_1$ guarantee.

For the closeness testing problem (under our bounded $\ell_2$ norm assumption), 
Batu {\em et al.} \cite{BFR+:00} construct a statistic whose expectation is proportional to the square of the $\ell_2$ distance between the two 
distributions $p$ and $q$. This statistic has three terms whose expectations are  proportional to $\|p\|_2^2$, $\|q\|_2^2$, and $2 p \cdot q$ respectively. 
Specifically, the first term is obtained by considering the number of self-collisions of a set of samples from $p$. 
Similarly, the second term is proportional to the number of self-collisions of a set of samples from $q$. 
The third term is obtained by considering the number of ``cross-collisions'' between some samples from $p$ and $q$. 
In order to simplify the analysis, \cite{BFR+:00} uses a separate set of fresh samples for the cross-collisions term. 
This set is independent of the set of samples used for the two self-collisions terms. 
While this choice makes the analysis cleaner, it ends up increasing the variance of the estimator too much 
leading to a sub-optimal sample upper bound. We show that by reusing samples to calculate 
the number of cross-collisions, one achieves sufficiently good variance to get optimal sample 
complexity. This comes at the cost of a more complicated analysis involving a very careful calculation of the variance.







\subsection{Notation}
We write $[n]$ to denote the set $\{1, \ldots, n\}$.
We consider discrete distributions over $[n]$, which are functions
$p: [n] \rightarrow [0,1]$ such that $\sum_{i=1}^n p_i =1.$
We use the notation $p_i$ to denote the probability of element
$i$ in distribution $p$. We will denote by $U_n$ the uniform distribution over $[n]$.

For $r \ge 1$, the $\ell_r$--norm of a distribution is identified with the $\ell_r$--norm of the corresponding vector, i.e.,
$\|p\|_r = \left(\sum_{i=1}^n |p_i|^r\right)^{1/r}$. The $\ell_1$ (resp. $\ell_2$) distance between distributions
$p$ and $q$ is defined as the  the $\ell_1$ (resp. $\ell_2$) norm of the vector of their difference, 
i.e., $\|p-q\|_1 = \sum_{i=1}^n |p_i -q_i|$ and $\|p-q\|_2 = \sqrt{\sum_{i=1}^n (p_i-q_i)^2}$.





\section{Testing Uniformity via Collisions} \label{sec:uniform}

In this section, we show that the natural collision uniformity tester 
proposed in~\cite{GRdist:00} is sample-optimal up to constant factors.
More specifically, we are given $m$ samples from a probability distribution $p$ over $[n]$, and 
we wish to distinguish (with high constant probability) between the cases that $p$ is uniform versus $\eps$-far from 
uniform in $\ell_1$-distance. The main result of this section is that the collision-based uniformity tester succeeds in this task
with $m = O(n^{1/2}/\eps^2)$ samples. 

In fact, we prove the following stronger $\ell_2$-guarantee for the collisions tester:
With $m = O(n^{1/2}/\eps^2)$ samples, it distinguishes between the cases that $\|p - U_n\|_2^2 \leq \eps^2/(2n)$ (completeness)
versus $\|p-U_n\|_2^2 \geq \eps^2/n$ (soundness). The desired $\ell_1$ guarantee follows from this $\ell_2$ guarantee 
by an application of the Cauchy-Schwarz inequality in the soundness case.

\medskip

Formally, we analyze the following tester:

\medskip
\fbox{\parbox{6in}{

{\bf Algorithm} \textsc{Test-Uniformity-Collisions$(p, n, \eps)$} \\
Input: sample access to a distribution $p$ over $[n]$, and $\eps>0$.\\
Output: ``YES'' if $\|p - U_n\|_2^2 \leq \eps^2/(2n)$; ``NO'' if $\|p-U_n\|_2^2 \ge \eps^2/n.$
\begin{enumerate}
  \item Draw $m$ iid samples from $p$.
  \item Let $\sigma_{ij}$ be an indicator variable which is $1$ if samples $i$ and $j$ are the same and $0$ otherwise.
  \item Define the random variable $s = \sum_{i<j} \sigma_{ij}$ and the threshold $t=\binom{m}{2} \cdot \frac{1+3\eps^2/4}{n}$
  \item If $s \ge t$ return ``NO''; otherwise, return ``YES''.
\end{enumerate}
}}

\bigskip

The following theorem characterizes the performance of the above estimator:

\begin{theorem} \label{thm:collisions-unif}
The above estimator, when given $m$ samples drawn from a distribution $p$ over $[n]$ will, with probability at least $3/4$,
distinguish the case that $\|p - U_n\|_2^2 \leq \eps^2/(2n)$ from the case 
that $||p-U_n||_2^2 \ge \eps^2/n$ provided that  $m  \ge  3200 n^{1/2}/\eps^2$.
\end{theorem}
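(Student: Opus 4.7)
The plan is the standard Chebyshev recipe for a threshold tester: compute $\E[s]$ and $\Var[s]$, and verify that the gap $|\E[s] - t|$ is several standard deviations.  A direct computation gives $\E[s] = \binom{m}{2}\|p\|_2^2$.  Writing $\Delta := \|p - U_n\|_2^2$ and using $\|p\|_2^2 = 1/n + \Delta$, one gets
\[
\E[s] - t \;=\; \binom{m}{2}\!\left(\Delta - \tfrac{3\eps^2}{4n}\right).
\]
In the completeness case $\Delta \leq \eps^2/(2n)$, so this is at most $-\binom{m}{2}\eps^2/(4n)$; in the soundness case $\Delta \geq \eps^2/n$ forces $\tfrac{3\eps^2}{4n} \leq \tfrac{3\Delta}{4}$, so this is at least $\binom{m}{2}\Delta/4$.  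Chebyshev's inequality then finishes the proof provided $\Var[s] \leq \tfrac{1}{4}\bigl(\binom{m}{2}\eps^2/(4n)\bigr)^2$ in completeness and $\Var[s] \leq \tfrac{1}{4}\bigl(\binom{m}{2}\Delta/4\bigr)^2$ in soundness.

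The second step is to expand $\Var[s] = \sum_{i<j}\sum_{k<l}\cov(\sigma_{ij},\sigma_{kl})$ and classify ordered pairs of $2$-subsets of $[m]$ by their intersection size.  Disjoint pairs involve independent samples and contribute $0$; equal pairs contribute $\binom{m}{2}(A - A^2)$ with $A := \|p\|_2^2$; pairs sharing exactly one index contribute $\cov = B - A^2$ with $B := \|p\|_3^3$ (since $\E[\sigma_{ij}\sigma_{il}] = \sum_x p_x^3$), and the number of such ordered pairs equals $6\binom{m}{3}$.  Hence
\[
\Var[s] \;=\; \binom{m}{2}(A - A^2) + 6\binom{m}{3}(B - A^2) \;\leq\; \binom{m}{2}\,A + 6\binom{m}{3}(B - A^2).
\]

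The main obstacle, and the point at which I would improve on prior work, is bounding $B - A^2$ sharply.  The naive bound $B \leq \|p\|_\infty \cdot A$ implicit in \cite{GRdist:00} loses a factor of $\eps^2$, because $\|p\|_\infty$ can be as large as $\Theta(1/\sqrt{n})$ even when $p$ is close to uniform.  Instead, write $p_i = 1/n + q_i$ with $\sum_i q_i = 0$ and $\sum_i q_i^2 = \Delta$, and expand
\[
B = \tfrac{1}{n^2} + \tfrac{3\Delta}{n} + \sum_i q_i^3,\qquad A^2 = \tfrac{1}{n^2} + \tfrac{2\Delta}{n} + \Delta^2.
\]
The $1/n^2$ and dominant linear-in-$\Delta$ terms partially cancel, giving $B - A^2 = \Delta/n + \sum_i q_i^3 - \Delta^2 \leq \Delta/n + \Delta^{3/2}$, where the last step uses $|\sum_i q_i^3| \leq \|q\|_\infty \Delta \leq \sqrt{\Delta}\cdot \Delta$.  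This is the precise cancellation that the GR analysis leaves on the table and is what drives the improved $\eps^{-2}$ dependence.

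To finish, I plug this into Chebyshev.  In the completeness case, $A \leq (1+\eps^2/2)/n$ and $\Delta \leq \eps^2/(2n)$, so each of the three summands $\binom{m}{2}A$, $6\binom{m}{3}\Delta/n$, and $6\binom{m}{3}\Delta^{3/2}$ is at most a small constant multiple of $\bigl(\binom{m}{2}\eps^2/(4n)\bigr)^2$ once $m \geq C\sqrt{n}/\eps^2$.  In the soundness case, I split on $\Delta$: for $\Delta \leq 1/n$ one has $A \leq 2/n$ and the computation mirrors completeness with a strictly larger gap $\binom{m}{2}\Delta/4 \geq \binom{m}{2}\eps^2/(4n)$; for $\Delta > 1/n$ one has $A \leq 2\Delta$ and the gap $\binom{m}{2}\Delta/4$ is so large that the variance bound is already comfortable for $m = O(\sqrt{n})$, which is strictly dominated by the other regime.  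A little bookkeeping of absolute constants then yields the stated threshold $m \geq 3200\sqrt{n}/\eps^2$.
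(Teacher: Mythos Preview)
Your proposal is correct and follows essentially the same approach as the paper: both compute the variance exactly, retain the crucial $-\|p\|_2^4$ term, parameterize $p_i = 1/n + q_i$ to exploit the cancellation $B - A^2 = \Delta/n + \sum_i q_i^3 - \Delta^2$, and bound $\sum_i q_i^3 \leq \|q\|_2^3 = \Delta^{3/2}$ before applying Chebyshev. The only cosmetic differences are that the paper splits the soundness analysis according to which of the two variance terms $m^2\|p\|_2^2$ and $m^3(\|p\|_3^3 - \|p\|_2^4)$ dominates (rather than by whether $\Delta \lessgtr 1/n$), and its completeness lemma is stated only for $p = U_n$ whereas you handle the full range $\Delta \le \eps^2/(2n)$.
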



The rest of this section is devoted to the proof of Theorem~\ref{thm:collisions-unif}.
Note that the condition of the theorem is equivalent to testing whether 
$\|p\|_2^2 \leq   \frac{1+\eps^2/2}{n}$ versus $\|p\|_2^2 \geq \frac{1+\eps^2}{n}$. 
Our tester takes $m=\frac{3200n^{1/2}}{\epsilon^2}$ samples from $p$ 
and distinguishes between the two cases with probability at least $3/4$.



\subsection{Analysis of \textsc{Test-Uniformity-Collisions}}

The analysis proceeds by bounding the expectation and variance of the estimator
for the completeness and soundness cases, and applying Chebyshev's inequality.
The novelty here is a tight analysis of the variance which leads to the optimal sample  bound.

We start by recalling the following simple closed formula for the expected value:
\begin{lemma} \label{lem:expectation-collisions} We have that
$
\E[s] = \binom{m}{2} \|p\|_2^2 \;.
$
\end{lemma}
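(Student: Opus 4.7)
The plan is a direct computation via linearity of expectation, so there is no real obstacle to flag; the work is essentially one line. First I would write $\E[s] = \sum_{i<j}\E[\sigma_{ij}]$ by linearity, where the sum has exactly $\binom{m}{2}$ terms, one for each unordered pair of sample indices. This reduces the claim to showing that $\E[\sigma_{ij}]$ is the same value, namely $\|p\|_2^2$, for every pair $i<j$.

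Next I would compute this common value. Since $\sigma_{ij}$ is an indicator, $\E[\sigma_{ij}] = \Pr[s_i = s_j]$, where $s_i, s_j$ are two independent draws from $p$. Conditioning on the value $k \in [n]$ taken by $s_i$ and using independence, $\Pr[s_i = s_j] = \sum_{k=1}^n \Pr[s_i = k]\Pr[s_j = k] = \sum_{k=1}^n p_k^2 = \|p\|_2^2$. Combining with the previous step yields $\E[s] = \binom{m}{2}\|p\|_2^2$, as claimed.

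Since every pair contributes the same collision probability and the samples are iid, neither step depends on any structural assumption about $p$ (in particular we do not need $p$ to be close to uniform), which is why this identity will be used for both the completeness and soundness analyses later. The only thing to be mildly careful about is that $\sigma_{ij}$ is defined for $i\neq j$ only and indexed by unordered pairs — this is already baked into the definition $s=\sum_{i<j}\sigma_{ij}$, so no correction factor of $2$ or subtraction of diagonal terms is needed.
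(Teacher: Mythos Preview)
Your proposal is correct and follows essentially the same approach as the paper: linearity of expectation together with the observation that $\Pr[s_i=s_j]=\sum_k p_k^2=\|p\|_2^2$ for each pair $i<j$. The paper states this in slightly more compressed form, but the argument is identical.
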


\begin{proof}
For any $i,j$, the probability that samples $i$ and $j$ are equal is $\|p\|_2^2$. 
By this and linearity of expectation, we get
\[
\E[s] = \E\left[\sum_{ij} \sigma_{ij}\right] = \sum_{ij} \E[\sigma_{ij}] = \sum_{ij} \|p\|_2^2 = \binom{m}{2} \|p\|_2^2.\]
\end{proof}

Thus, we see that in the completeness case the expected value is at most $\binom{m}{2} \cdot \frac{1+\eps^2/2}{n}$. 
In the soundness case,
the expected value is at least $\binom{m}{2} \cdot \frac{1+\epsilon^2}{n}$. 
This motivates our choice of the threshold $t$ halfway between these expected values.

In order to argue that the statistic will be close to its expected value, we bound its variance from above and use Chebyshev's inequality. 
We bound the variance in two steps. First, we obtain the following bound:
\begin{lemma}\label{lem:variance} We have that 
$\var[s] \leq m^2 \cdot \|p\|_2^2 + m^3 \cdot ( \|p\|_3^3 -\|p\|_2^4).$
\end{lemma}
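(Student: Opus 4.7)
The plan is to expand the variance of $s = \sum_{i<j} \sigma_{ij}$ as a double sum of covariances $\sum_{i<j,\, k<l} \cov(\sigma_{ij}, \sigma_{kl})$ and classify the terms by the overlap $|\{i,j\} \cap \{k,l\}|$, which can be $0$, $1$, or $2$. Since distinct samples from $p$ are independent, whenever $\{i,j\}$ and $\{k,l\}$ are disjoint, $\sigma_{ij}$ and $\sigma_{kl}$ are functions of disjoint (independent) samples, so the covariance vanishes. This reduces the problem to computing the contribution of pairs of pairs sharing one or two indices.

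For the overlap-two case there are exactly $\binom{m}{2}$ diagonal terms, each contributing $\var(\sigma_{ij}) = \|p\|_2^2 - \|p\|_2^4$. For the overlap-one case, say $\sigma_{ab}$ and $\sigma_{ac}$ with $a,b,c$ distinct, the product $\sigma_{ab}\sigma_{ac}$ indicates that samples $a,b,c$ are all equal, which happens with probability $\sum_\ell p_\ell^3 = \|p\|_3^3$. Combined with $\E[\sigma_{ab}]\E[\sigma_{ac}] = \|p\|_2^4$, each such term contributes $\|p\|_3^3 - \|p\|_2^4$. A counting argument (choose three distinct indices in $\binom{m}{3}$ ways, then pick an ordered pair of distinct two-element subsets among the three available, giving $3 \cdot 2 = 6$ choices) gives $6\binom{m}{3} = m(m-1)(m-2)$ such ordered contributions. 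Putting these together yields the exact identity
\[
\var[s] = \binom{m}{2}\bigl(\|p\|_2^2 - \|p\|_2^4\bigr) + m(m-1)(m-2)\bigl(\|p\|_3^3 - \|p\|_2^4\bigr).
\]

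To finish, I would drop the nonpositive $-\|p\|_2^4$ in the first parenthesis, bound $\binom{m}{2} \leq m^2$, and use $m(m-1)(m-2) \leq m^3$. The latter replacement is legitimate because $\|p\|_3^3 - \|p\|_2^4 \geq 0$; indeed, by Cauchy--Schwarz applied to $p_i = \sqrt{p_i} \cdot p_i^{1/2} \cdot \sqrt{p_i \cdot p_i^2}$, one has $\|p\|_2^4 = \bigl(\sum_i p_i \cdot p_i\bigr)^2 \leq \bigl(\sum_i p_i\bigr)\bigl(\sum_i p_i^3\bigr) = \|p\|_3^3$. Combining everything gives the stated bound $\var[s] \leq m^2\|p\|_2^2 + m^3(\|p\|_3^3 - \|p\|_2^4)$.

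The only subtle point — and the step I would be most careful with — is the combinatorial counting of ordered pairs of two-element subsets sharing exactly one index, and verifying that the sign of $\|p\|_3^3 - \|p\|_2^4$ is such that replacing $m(m-1)(m-2)$ by $m^3$ is an upper bound rather than a lower bound. Neither is deep, but they are exactly the places where an off-by-a-constant slip would propagate into a looser variance estimate and ultimately a suboptimal sample complexity.
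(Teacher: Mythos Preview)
Your proof is correct and is essentially the same as the paper's: both compute the exact identity $\var[s] = \binom{m}{2}(\|p\|_2^2 - \|p\|_2^4) + m(m-1)(m-2)(\|p\|_3^3 - \|p\|_2^4)$ by classifying pairs $\{i,j\},\{k,\ell\}$ according to overlap, and then relax to the stated upper bound. The only cosmetic differences are that the paper expands $\E[s^2]-\E[s]^2$ directly (so the disjoint-overlap terms cancel arithmetically rather than vanishing as covariances) and does not explicitly justify $\|p\|_3^3 \ge \|p\|_2^4$, which you supply via Cauchy--Schwarz; your phrasing of that step is slightly garbled, but the inequality $(\sum_i p_i^{1/2}\cdot p_i^{3/2})^2 \le (\sum_i p_i)(\sum_i p_i^3)$ you write out is exactly right.
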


\begin{proof} The lemma follows from the following chain of (in-)equalities:
\begin{align*}
\var[s] &= \E[s^2] - \E[s]^2 \\
&= \E\left[ \sum_{i<j} \sum_{k < \ell} \sigma_{ij} \sigma_{k\ell} \right] - \binom{m}{2}^2 \|p\|_2^4\\
&= \E\left[ \sum_{\substack{i<j; \; k<\ell \\ \text{all distinct}}} \sigma_{ij} \sigma_{k\ell} + 2\sum_{i<j<\ell} \sigma_{ij}\sigma_{j\ell} + 2\sum_{\substack{i,k<j\\i \neq k}} \sigma_{ij}\sigma_{kj} + \sum_{i<j} \sigma_{ij}^2 \right] - \binom{m}{2}^2 \|p\|_2^4 \\
&= \binom{m}{2} \binom{m-2}{2} \|p\|_2^4 + 2 \cdot \binom{m}{3} \|p\|_3^3 + 4 \cdot \binom{m}{3} \|p\|_3^3 + \binom{m}{2} \|p\|_2^2 - \binom{m}{2}^2 \|p\|_2^4 \\
&= \binom{m}{2} \cdot (\|p\|_2^2 - \|p\|_2^4) + m(m-1)(m-2) \cdot ( \|p\|_3^3 -\|p\|_2^4) \\
&\leq m^2 \cdot \|p\|_2^2 + m^3 \cdot ( \|p\|_3^3 -\|p\|_2^4).
\end{align*}
\end{proof}

\paragraph{Remark.}
We note that the upper bound of the previous lemma is tight, up to constant factors. 
The $-m^3\|p\|_2^4$ term is critical for getting the optimal dependence on $\eps$ in the sample bound. 

\medskip

Continuing the analysis, we now derive an upper bound on the number 
of samples that suffices for the tester to have the desired success probability of $3/4$.

\begin{lemma}\label{lem:samples}
Let $\alpha$ satisfy $\|p\|_2^2 = \frac{1+\alpha}{n}$ and 
 $\sigma$ be the standard deviation of $s$. The number of samples required by \textsc{Test-Uniformity-Collisions} is at most
\[
m \leq \sqrt{\frac{5 \sigma n}{|\alpha-3\epsilon^2/4|}} \;,
\]
in order to get error probability at most $1/4$.
\end{lemma}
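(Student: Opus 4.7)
The plan is to apply Chebyshev's inequality using the mean formula from \lemref{lem:expectation-collisions} together with the definition of the threshold $t$, and then to solve the resulting quadratic condition for $m$.

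First I would compute the gap between $\E[s]$ and the threshold. By \lemref{lem:expectation-collisions}, $\E[s] = \binom{m}{2}(1+\alpha)/n$, while $t = \binom{m}{2}(1+3\eps^2/4)/n$, so
\[
|\E[s]-t| \;=\; \binom{m}{2}\cdot\frac{|\alpha-3\eps^2/4|}{n}.
\]
Moreover, on completeness inputs we have $\alpha \leq \eps^2/2 < 3\eps^2/4$ so $\E[s]<t$, and on soundness inputs $\alpha \geq \eps^2 > 3\eps^2/4$ so $\E[s]>t$. In either case, the tester outputs the correct answer whenever $|s-\E[s]| < |\E[s]-t|$.

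Second, I would apply Chebyshev's inequality: $\Pr[\,|s-\E[s]|\geq 2\sigma\,]\leq 1/4$. Combining with the previous observation, it suffices to guarantee that $|\E[s]-t|\geq 2\sigma$, i.e.
\[
\frac{m(m-1)}{2}\cdot\frac{|\alpha-3\eps^2/4|}{n} \;\geq\; 2\sigma
\quad\Longleftrightarrow\quad
m(m-1) \;\geq\; \frac{4\sigma n}{|\alpha-3\eps^2/4|}.
\]
Using the elementary inequality $m(m-1)\geq (4/5)m^2$ valid for $m\geq 5$ (the case $m<5$ being vacuous, since otherwise the claimed bound $\sqrt{5\sigma n/|\alpha-3\eps^2/4|}$ is already a sample count at which the tester has nothing to test), the condition is implied by
\[
m^2 \;\geq\; \frac{5\sigma n}{|\alpha-3\eps^2/4|},
\]
which is exactly the stated bound. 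This closes the argument.

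There is no real obstacle in this lemma itself; it is a direct Chebyshev computation combining \lemref{lem:expectation-collisions} with the chosen threshold. The only mild subtlety is that $\sigma$ depends implicitly on $m$, so the bound is self-referential and will only become explicit after plugging in the variance estimate of \lemref{lem:variance}. That substitution — and in particular the delicate use of the $-m^3\|p\|_2^4$ cancellation — is where the genuine difficulty of proving \thmref{thm:collisions-unif} lies, not here.
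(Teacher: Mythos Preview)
Your proposal is correct and follows essentially the same route as the paper: compute the gap $|\E[s]-t|=\binom{m}{2}|\alpha-3\eps^2/4|/n$, apply Chebyshev with $k=2$, and relax $\binom{m}{2}$ to $m^2/5$ to obtain the stated bound. You make explicit the step $m(m-1)\geq (4/5)m^2$ for $m\geq 5$, which the paper leaves implicit when it passes to the ``slightly stronger condition''; your remark that $\sigma$ depends on $m$ and that the real work happens later when \lemref{lem:variance} is substituted is also exactly right.
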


\begin{proof}
By Chebyshev's inequality, we have that
\[
\Pr\left[\;\left|s-\binom{m}{2}\|p\|_2^2 \right| \geq k \sigma\right] \leq \frac{1}{k^2}
\]
where $\sigma \triangleq \sqrt{\var[s]}$.

We want $s$ to be closer to its expected value than the threshold is to its expected 
value because when this occurs, the tester outputs the right answer. 
Furthermore, to achieve our desired probability of error of at most $1/4$, 
we want this to happen with probability at least $3/4$. 
So, we set $k=2$, and then we want
\[
k \sigma \leq |\E[s]-t| =\left| \binom{m}{2} |\|p\|_2^2 - (1+3\epsilon^2/4)/n\right| = \binom{m}{2} |\alpha - 3\epsilon^2/4| / n
\]
It suffices for the number of samples $m$ to satisfy the slightly stronger condition that
\[
\sigma \leq m^2 \cdot \frac{|\alpha-3\epsilon^2/4|}{5n}.
\]
So, it suffices to have
\[
m \geq \sqrt{\frac{5 \sigma n}{|\alpha-3\epsilon^2/4|}}.
\]
We might as well take the smallest number of samples $m$ for which the tester works, which implies the desired inequality.
\end{proof}

We are now ready to show an upper bound on the number of samples in the completeness case, 
i.e., when $p$ is the uniform distribution.
\begin{lemma}
In the completeness case,  the required number of samples is at most
\[
m \leq \frac{6n^{1/2}}{\epsilon^2} \;,
\]
in order to get error probability $1/4$.
\end{lemma}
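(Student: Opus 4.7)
The plan is to specialize the general bounds already established (Lemma~\ref{lem:variance} and Lemma~\ref{lem:samples}) to the case $p = U_n$ and then solve the resulting implicit inequality for $m$.

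First, I would observe that in the completeness case $p = U_n$, so $\|p\|_2^2 = 1/n$, which in the notation of Lemma~\ref{lem:samples} gives $\alpha = 0$ and $|\alpha - 3\epsilon^2/4| = 3\epsilon^2/4$. Moreover, for the uniform distribution the two ``cubic'' quantities coincide: $\|p\|_3^3 = 1/n^2 = \|p\|_2^4$. Hence the $m^3$ term in Lemma~\ref{lem:variance} disappears entirely, and we are left with the clean bound
\[
\var[s] \;\leq\; m^2 \cdot \|p\|_2^2 \;=\; \frac{m^2}{n},
\]
so that $\sigma \leq m/\sqrt{n}$. (Using the sharper expression $\binom{m}{2}(\|p\|_2^2 - \|p\|_2^4)$ from inside the proof of Lemma~\ref{lem:variance} saves an extra factor of $\sqrt{2}$, which is more than enough room.)

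Next, I would substitute this into Lemma~\ref{lem:samples}. The sufficient condition becomes
\[
m \;\geq\; \sqrt{\frac{5\sigma n}{|\alpha - 3\epsilon^2/4|}} \;\geq\; \sqrt{\frac{5 (m/\sqrt{n})\, n}{3\epsilon^2/4}} \;=\; \sqrt{\frac{20\, m\, \sqrt{n}}{3\epsilon^2}}.
\]
Squaring and rearranging yields $m \geq 20\sqrt{n}/(3\epsilon^2)$, which is comfortably satisfied by $m \leq 6 n^{1/2}/\epsilon^2$; one then verifies that this choice is consistent (i.e., the implicit inequality closes on itself).

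I do not expect any serious obstacle here: the work has already been done in Lemmas~\ref{lem:variance} and \ref{lem:samples}, and the only substantive point is the cancellation $\|p\|_3^3 - \|p\|_2^4 = 0$ under uniformity, which kills the $m^3$ variance term. That cancellation is exactly the ``tight analysis of the variance'' foreshadowed in the overview, and it is what drives the dependence on $\epsilon$ down from $1/\epsilon^4$ to $1/\epsilon^2$.
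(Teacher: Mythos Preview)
Your approach is exactly the paper's: observe $\|p\|_2^2 = 1/n$ and $\|p\|_3^3 = \|p\|_2^4 = 1/n^2$, so the $m^3$ term in Lemma~\ref{lem:variance} vanishes, yielding $\sigma \le m/\sqrt{n}$; then substitute $\alpha = 0$ into Lemma~\ref{lem:samples} and solve the resulting implicit inequality for $m$. One correction to your closing commentary: the cancellation $\|p\|_3^3 - \|p\|_2^4 = 0$ in the completeness case is \emph{not} what drives the improvement from $1/\epsilon^4$ to $1/\epsilon^2$ --- the crucial role of the $-\|p\|_2^4$ term is in the \emph{soundness} analysis, where it does not cancel but still controls the variance (see the remark after Lemma~\ref{lem:variance} and the final case analysis).
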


\begin{proof}
It is easy to see that $\|p\|_2^2=1/n$ and $\|p\|_3^3=\|p\|_2^4=1/n^2$. 
Thus, by Lemma \ref{lem:variance}, $\sigma \leq m/n^{1/2}$. 
Also, we know $\alpha=0$ when $p$ is uniform. 
Substituting these two facts into Lemma \ref{lem:samples} and solving for $m$ gives
\[
m \leq \frac{6n^{1/2}}{\epsilon^2}.
\]
\end{proof}

We now turn to the soundness case, 
where $p$ is far from uniform. 
By Lemma \ref{lem:samples}, it suffices to bound from above the variance $\sigma^2$. 
We proceed by a case analysis based on 
whether the term $m^2 \|p\|_2^2$ or $m^3 (\|p\|_3^3 - \|p\|_2^4)$ 
contributes more to the variance.

\subsubsection{Case when $m^2\|p\|_2^2$ is Larger}

\begin{lemma}
Consider the soundness case, 
where $\|p\|_2^2=(1+\alpha)/n$ for $\alpha \geq \epsilon^2$. 
If $m^2\|p\|_2^2$ contributes more to the variance, 
i.e., if $m^2\|p\|_2^2 \geq m^3 (\|p\|_3^3 - \|p\|_2^4)$, 
then the required number of samples is at most
\[
m \leq \frac{48n^{1/2}}{\epsilon^2}
\]
in order to get error probability $1/4$.
\end{lemma}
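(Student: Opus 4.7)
The plan is to combine the case hypothesis with \lemref{lem:variance} to simplify the variance to a single term, then substitute the resulting bound on $\sigma$ into \lemref{lem:samples} and solve the resulting (implicit) inequality for $m$. The last step is a short case analysis on $\alpha$.

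First I would use the hypothesis $m^2\|p\|_2^2 \geq m^3(\|p\|_3^3 - \|p\|_2^4)$ together with \lemref{lem:variance} to get
\[
\var[s] \;\leq\; 2 m^2 \|p\|_2^2 \;=\; \frac{2m^2(1+\alpha)}{n}, \qquad \text{so} \qquad \sigma \;\leq\; m\sqrt{\tfrac{2(1+\alpha)}{n}}.
\]
Next, since $\alpha \geq \epsilon^2$, I have $|\alpha - 3\epsilon^2/4| \geq \alpha/4$, so \lemref{lem:samples} is satisfied whenever
\[
m \;\geq\; \sqrt{\frac{5\sigma n}{\alpha/4}} \;=\; \sqrt{\frac{20\,\sigma n}{\alpha}}.
\]
This is implicit in $m$ (since $\sigma$ depends on $m$); substituting the variance bound and squaring both sides gives $m^2 \geq 20 m \sqrt{2(1+\alpha)n}/\alpha$, i.e.\ the explicit sufficient condition
\[
m \;\geq\; \frac{20\sqrt{2(1+\alpha)n}}{\alpha}.
\]

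It remains to show the right-hand side is at most $48 n^{1/2}/\epsilon^2$. I split into two regimes. If $\alpha \leq 1$, then $\sqrt{1+\alpha} \leq \sqrt{2}$, so the bound is at most $40 n^{1/2}/\alpha \leq 40 n^{1/2}/\epsilon^2$ using $\alpha \geq \epsilon^2$. If $\alpha \geq 1$, then $\sqrt{1+\alpha} \leq \sqrt{2\alpha}$, giving $40 \sqrt{n/\alpha} \leq 40 n^{1/2}/\epsilon$, which is at most $40 n^{1/2}/\epsilon^2$ for $\epsilon \leq 1$ (we may assume $\epsilon \leq 1$ throughout). In either regime, $m = 48 n^{1/2}/\epsilon^2$ samples suffice.

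The only mild obstacle is the implicit nature of \lemref{lem:samples}: because $\sigma$ grows with $m$, one has to resolve the dependence by squaring before reading off an explicit bound. Beyond that, the argument is purely algebraic bookkeeping, and no new probabilistic input beyond \lemref{lem:variance} is required.
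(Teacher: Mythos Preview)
Your proposal is correct and follows essentially the same approach as the paper: bound $\sigma^2 \le 2m^2(1+\alpha)/n$ via the case hypothesis and \lemref{lem:variance}, plug into \lemref{lem:samples}, and solve the resulting implicit inequality for $m$. The only cosmetic differences are that you lower-bound $\alpha - 3\epsilon^2/4 \ge \alpha/4$ up front and then split on $\alpha \lessgtr 1$, whereas the paper retains the full denominator and maximizes $\sqrt{1+\alpha}/(\alpha - 3\epsilon^2/4)$ over $\alpha \ge \epsilon^2$ by calculus (finding the maximum at $\alpha=\epsilon^2$); both routes use $\epsilon \le 1$ at the end and arrive at the same constant.
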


\begin{proof}
Suppose that $m^2\|p\|_2^2 \geq m^3 (\|p\|_3^3 - \|p\|_2^4)$. 
Then $\sigma^2 \leq 2 m^2 \|p\|_2^2 = 2 m^2 (1+\alpha)/n$. 
Substituting this into Lemma \ref{lem:samples} and solving for $m$ gives that the necessary number of samples is at most
\[
m \leq 8 n^{1/2} \cdot \frac{\sqrt{1+\alpha}}{(\alpha-3\epsilon^2/4)}.
\]
Using calculus to maximize this expression by varying $\alpha$, 
one gets that $\alpha=\epsilon^2$ maximizes the expression. Thus,
\[
m \leq 32 n^{1/2} \cdot \frac{\sqrt{1+\epsilon^2}}{\epsilon^2} \leq 32 n^{1/2} \cdot \frac{\sqrt{2}}{\epsilon^2} 
< \frac{48 n^{1/2}}{\epsilon^2}.
\]
\end{proof}

\subsubsection{Case when $m^3(\|p\|_3^3 - \|p\|_2^4)$ is Larger}

\begin{lemma}
Consider the soundness case, 
where $\|p\|_2^2=(1+\alpha)/n$ for $\alpha \geq \epsilon^2$. 
If $m^3 (\|p\|_3^3 - \|p\|_2^4)$ contributes more to the variance, 
i.e., if $m^3 (\|p\|_3^3 - \|p\|_2^4) \geq m^2\|p\|_2^2$, 
then the required number of samples is at most
\[
m \leq \frac{3200n^{1/2}}{\epsilon^2}
\]
in order to get error probability $\leq 1/4$.
\end{lemma}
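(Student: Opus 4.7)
The plan is to obtain a sufficiently sharp upper bound on $\|p\|_3^3 - \|p\|_2^4$ in terms of $\alpha$ and $n$, and then substitute it into Lemma~\ref{lem:samples} using the case hypothesis, which combined with Lemma~\ref{lem:variance} yields $\sigma^2 \le 2m^3(\|p\|_3^3 - \|p\|_2^4)$. The key starting observation is that if $J$ is drawn according to $p$, then
\[
\|p\|_3^3 - \|p\|_2^4 \;=\; \E_{J \sim p}[p_J^2] - \bigl(\E_{J \sim p}[p_J]\bigr)^2 \;=\; \var_{J \sim p}(p_J),
\]
so the task reduces to bounding a variance of a bounded nonnegative random variable.

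To bound this variance tightly I would use the elementary fact that $\var(X) \le \E[(X-c)^2]$ for any constant $c$, applied with the natural choice $c = 1/n$. This yields
\[
\|p\|_3^3 - \|p\|_2^4 \;\le\; \sum_{i=1}^n p_i\, (p_i - 1/n)^2 \;\le\; \|p\|_\infty \cdot \sum_{i=1}^n (p_i - 1/n)^2 \;=\; \|p\|_\infty \cdot \frac{\alpha}{n},
\]
where I used $\|p - U_n\|_2^2 = \|p\|_2^2 - 1/n = \alpha/n$. Combining with the standard inequality $\|p\|_\infty \le \|p\|_2 = \sqrt{(1+\alpha)/n}$ gives the clean closed-form bound
\[
\|p\|_3^3 - \|p\|_2^4 \;\le\; \frac{\alpha \sqrt{1+\alpha}}{n^{3/2}}.
\]

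Plugging this into Lemma~\ref{lem:samples} through $\sigma^2 \le 2m^3(\|p\|_3^3 - \|p\|_2^4)$, and using $\alpha - 3\epsilon^2/4 \ge \alpha/4$ (valid since $\alpha \ge \epsilon^2$), reduces the problem to checking an inequality of the form $m \le C\, n^{1/2}\, \sqrt{1+\alpha}/\alpha$ for an absolute constant $C$. The function $\sqrt{1+\alpha}/\alpha$ is decreasing in $\alpha$, so its maximum over $\alpha \ge \epsilon^2$ occurs at $\alpha = \epsilon^2$; combined with $\sqrt{1+\epsilon^2} \le \sqrt{2}$ this yields the $O(n^{1/2}/\epsilon^2)$ dependence, and after tracking constants through the two factors of $4$, the final bound $3200\, n^{1/2}/\epsilon^2$.

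The main obstacle is choosing the right bound for the variance. The naive inequality $\|p\|_3^3 \le \|p\|_\infty \|p\|_2^2$ (i.e., without centering) leaves a factor of $(1+\alpha)/n$ in place of $\alpha/n$, producing an extra $\alpha^{-1} \ge \epsilon^{-2}$ in the final sample count and therefore only the suboptimal $O(n^{1/2}/\epsilon^4)$ that prior analyses were stuck with. The centering trick at $1/n$ — exploiting that $\var(X) \le \E[(X-c)^2]$ for any $c$, not only the mean — is exactly what recovers the missing factor of $\alpha$, which together with the $\alpha^{-2}$ coming from squaring $|\alpha - 3\epsilon^2/4|$ is what yields the optimal dependence on $\epsilon$.
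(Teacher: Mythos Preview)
Your argument is correct and reaches the same conclusion, but the route is genuinely different from the paper's.

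The paper does not use the variance interpretation. Instead it first weakens $\|p\|_2^4$ to $1/n^2$, writes $p_i = 1/n + a_i$ with $\sum_i a_i = 0$ and $\|a\|_2^2 = \alpha/n$, and expands
\[
\|p\|_3^3 - \tfrac{1}{n^2} \;=\; \tfrac{3}{n}\|a\|_2^2 + \sum_i a_i^3 \;\le\; \tfrac{3\alpha}{n^2} + \|a\|_2^3 \;=\; \tfrac{3\alpha}{n^2} + \Bigl(\tfrac{\alpha}{n}\Bigr)^{3/2},
\]
using $\|a\|_3 \le \|a\|_2$. Plugging this into the same $m \le 50 n^2(\|p\|_3^3 - \|p\|_2^4)/(\alpha - 3\epsilon^2/4)^2$ and $(\alpha - 3\epsilon^2/4) \ge \alpha/4$ yields $m \le 2400/\alpha + 800\,n^{1/2}/\sqrt{\alpha}$, which at $\alpha = \epsilon^2$ gives the stated $3200\,n^{1/2}/\epsilon^2$.

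Your approach replaces this expansion by the single observation $\|p\|_3^3 - \|p\|_2^4 = \var_{J\sim p}(p_J)$, then the shift inequality $\var(X) \le \E[(X - 1/n)^2]$, then the crude but sufficient $\|p\|_\infty \le \|p\|_2$. This is more conceptual and in fact produces a slightly smaller constant: one gets $m \le 800\,n^{1/2}\sqrt{1+\alpha}/\alpha \le 800\sqrt{2}\,n^{1/2}/\epsilon^2$, comfortably below $3200\,n^{1/2}/\epsilon^2$. The paper's expansion, on the other hand, makes the two contributions ($3\alpha/n^2$ from the quadratic term and $\alpha^{3/2}/n^{3/2}$ from the cubic term) visible separately, which shows explicitly that neither term alone would suffice and that the $-\|p\|_2^4$ cancellation is what kills the would-be $1/\epsilon^4$ contribution. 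Both arguments pivot on the same phenomenon---centering at $1/n$ so that only the deviation $\alpha/n$ survives---but yours packages it as a variance bound while the paper does it by algebraic expansion.
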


\begin{proof}
Suppose that $m^3 (\|p\|_3^3 - \|p\|_2^4) \geq m^2\|p\|_2^2$. 
Then $\sigma^2 \leq 2 m^3 (\|p\|_3^3 - \|p\|_2^4)$. 
Substituting this into Lemma \ref{lem:samples} and solving for $m$ 
gives that the necessary number of samples is at most
\[
m \leq 50 n^2 \cdot \frac{\|p\|_3^3 - \|p\|_2^4}{(\alpha - 3\epsilon^2/4)^2}.
\]
Let us parameterize $p$ as $p_i = 1/n + a_i$ for some vector $a$. 
Then we have $\|a\|_2^2 = \alpha/n$, and we can write
\begin{align*}
m &\leq 50 n^2 \cdot \frac{\|p\|_3^3 - \|p\|_2^4}{(\alpha - 3\epsilon^2/4)^2} \leq 50n^2 \cdot \frac{\|p\|_3^3 - \|p\|_2^4}{({\alpha/4})^2} & \text{(since $\epsilon^2 \leq \alpha$)}\\
&\leq 50n^2 \cdot \frac{\|p\|_3^3 - \frac{1}{n^2}}{({\alpha/4})^2} 
= 50n^2 \cdot \frac{\left[ \sum_{i=1}^n (1/n + a_i)^3 \right] - \frac{1}{n^2}}{({\alpha/4})^2} \\
&= 50n^2 \cdot \frac{\left[ \frac{1}{n^2} + \frac{3}{n^2} \sum_{i=1}^n a_i + 
\frac{3}{n} \sum_{i=1}^n a_i^2 + \sum_{i=1}^n a_i^3 \right] - \frac{1}{n^2}}{({\alpha/4})^2} \\
&= 50n^2 \cdot \frac{ \frac{3}{n^2} \sum_{i=1}^n a_i + \frac{3}{n} \sum_{i=1}^n a_i^2 + \sum_{i=1}^n a_i^3}{({\alpha/4})^2} \\
&= 50n^2 \cdot \frac{ \frac{3}{n} \sum_{i=1}^n a_i^2 + \sum_{i=1}^n a_i^3}{({\alpha/4})^2} & \text{(since $\littlesum_{i=1}^n a_i = 0$)} \\
&\leq 50n^2 \cdot \frac{ \frac{3}{n} \|a\|_2^2 + \|a\|_3^3}{({\alpha/4})^2} \leq 50n^2 \cdot \frac{ \frac{3}{n} \|a\|_2^2 + \|a\|_2^3}{({\alpha/4})^2} \\
&= 50n^2 \cdot \frac{ \frac{3}{n} (\alpha/n) + (\alpha/n)^{3/2}}{({\alpha/4})^2} = \frac{2400}{\alpha} + \frac{{800}n^{1/2}}{\sqrt{\alpha}}\\
&\leq \frac{2400}{\epsilon^2} + \frac{800n^{1/2}}{\sqrt{\epsilon^2}} & \text{(since $\epsilon^2 \leq \alpha$)}\\
&\leq \frac{3200n^{1/2}}{\epsilon^2} \;.
\end{align*}
\end{proof}

Note that, as mentioned earlier, if we had ignored the $-\|p\|_2^4$ term, 
we would have had an $\Omega(1/\epsilon^4)$ term in our bound, 
which would have given us the wrong dependence on $\epsilon$. 

Theorem~\ref{thm:collisions-unif} now follows as an immediate consequence of these last three lemmas.

\paragraph{Remark.}
It is worth noting that the collisions statistic analyzed in this section is very similar to the chi-squared-like uniformity tester 
in \cite{DKN:15} -- itself a simplification of similar testers in~\cite{CDVV14, VV14} -- which also achieves 
the optimal sample complexity of $O(n^{1/2} / \epsilon^2)$. 
Specifically, if $X_i$ denotes the number of times we see the $i$-th domain element in the sample, 
the \cite{DKN:15} statistic is $\sum_i (X_i - m/n)^2 - X_i = \sum_{i<j} \sigma_{ij}  -2\frac{m}{n} \sum_i X_i +\frac{ m^2}{n}$. 
We note that the \cite{DKN:15} analysis uses Poissonization; i.e., instead of drawing $m$ samples from the distribution, we draw $\Poi(m)$ samples. 
Without Poissonization, the aforementioned statistic simplifies to $s - \frac{ m^2}{n}$, where $s$ is the collisions statistic. 
While the non-Poissonized versions of the two testers are equivalent, the Poissonized versions are not. 
Specifically, the Poissonized version of the  \cite{DKN:15} uniformity tester has sufficiently good variance to yield the sample-optimal bound. 
On the other hand, the Poissonized version of the collisions statistic 
does not have good variance: Specifically, its variance 
does not have the $-\|p\|_2^4$ term which -- as noted earlier -- is necessary to get the optimal $\epsilon$ dependence.

\section{Testing  Closeness via Collisions} \label{sec:equiv}

Given samples from two unknown distributions $p, q$ over $[n]$
with the promise that $\max\{ \|p\|_2^2, \|q\|_2^2\} \leq b$, we want to distinguish between the cases that $\|p-q\|_2 \leq \eps/2$ 
versus $\|p-q\|_2 \ge \eps.$ We show that a natural collisions-based tester succeeds in this task with $O(b^{1/2}/\eps^2)$ samples.
The estimator we analyze is a slight variant of the $\ell_2$ tester in~\cite{BFR+:00}, described in pseudocode below.

We define the number of self-collisions in a sequence of samples from a distribution as $\sum_{i<j} \sigma_{ij}$, 
where $\sigma_{ij}$ is the indicator variable denoting whether samples $i$ and $j$ are the same. 
Similarly, we define the number of cross-collisions between two sequences of samples as $\sum_{i,j} \ell_{ij}$, 
where $\ell_{ij}$ is the indicator variable denoting 
whether sample $i$ from the first sequence is the same as sample $j$ from the second sequence.

\bigskip

\fbox{\parbox{6in}{

{\bf Algorithm} \textsc{Test-Closeness-Collisions}$(p, q, n, b, \eps)$ \\
Input: sample access to distribution $p, q$ over $[n]$, $\eps, b>0$.\\
Output: ``YES'' if $\|p-q\|_2 \leq \eps/2$; ``NO'' if $\|p-q\|_2 \ge \eps.$
\begin{enumerate}
  \item Draw two multisets $S_p, S_q$ of $m$ iid samples from $p, q$. 
  Let $C_1$ denote the number of self-collisions of $S_p$, 
  $C_2$ denote the number of self-collisions of $S_q$, 
  and $C_3$ denote the number of cross-collisions between $S_p$ and $S_q$.
  \item Define the random variable $Z = C_1 + C_2 - \frac{m-1}{m} \cdot C_3$ 
  and the threshold $t=\new{\binom{m}{2}} \epsilon^2/2$. 
  \item If $Z \ge t$ return ``NO''; otherwise, return ``YES''.
\end{enumerate}
}}

\bigskip

The following theorem characterizes the performance of the above estimator:

\begin{theorem}\label{thm:collisions-closeness}
There exists an absolute constant $c$ such that the above estimator, 
when given $m$ samples drawn from each of two distributions, $p, q$ over $[n]$ will, with probability at least $3/4$, 
distinguish the case $||p-q||_2 \le \eps/2$ from the case 
that $||p-q||_2 \ge \eps$ provided that $m \ge c \cdot \frac{b^{1/2}}{\eps^2},$ where $b$ is an upper bound on $||p||_2^2, ||q||_2^2$.
\end{theorem}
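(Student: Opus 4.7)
The plan is to apply Chebyshev's inequality to $Z$: the coefficient $\tfrac{m-1}{m}$ in front of $C_3$ is chosen so that the expectation has a clean form and the reuse of samples produces a crucial cancellation in the variance. First, since $\E[C_1]=\binom{m}{2}\|p\|_2^2$, $\E[C_2]=\binom{m}{2}\|q\|_2^2$ and $\E[C_3]=m^2\langle p,q\rangle$ (by independence of $S_p$ and $S_q$), and since $\tfrac{m-1}{m}\cdot m^2=2\binom{m}{2}$, linearity of expectation gives the clean identity $\E[Z]=\binom{m}{2}\|p-q\|_2^2$. In the completeness case this is at most $\binom{m}{2}\eps^2/4$ and in the soundness case at least $\binom{m}{2}\eps^2$, so with threshold $t=\binom{m}{2}\eps^2/2$ we have $|\E[Z]-t|\ge\binom{m}{2}\eps^2/4$ in both cases. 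It therefore suffices to prove $\var(Z)=O(m^4\eps^4)$.

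I would compute the variance via the expansion
\[
\var(Z)=\var(C_1)+\var(C_2)+\bigl(\tfrac{m-1}{m}\bigr)^2\var(C_3)-2\tfrac{m-1}{m}\bigl(\cov(C_1,C_3)+\cov(C_2,C_3)\bigr),
\]
using $\cov(C_1,C_2)=0$ by independence of $S_p$ and $S_q$. The marginal variances of $C_1,C_2$ come from Lemma~\ref{lem:variance}. For $\var(C_3)$ and the two covariances I would expand the sums of indicators and case-split on coincidences among the four sample indices; for instance, in $\cov(C_1,C_3)=\sum_{i<j,k,\ell}\cov(\sigma_{ij},\ell_{k\ell})$ only the pairs with $k\in\{i,j\}$ contribute, giving $\cov(C_1,C_3)=m^2(m-1)\bigl(\sum_x p_x^2 q_x-\|p\|_2^2\langle p,q\rangle\bigr)$, and symmetrically for $\cov(C_2,C_3)$. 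When the pieces are added and grouped by powers of $m$, the $O(m^3)$ contributions of the form $\langle p,q\rangle^2$ and $\|p\|_2^4,\|q\|_2^4$ combine, via completing-the-square algebra, with the $2(\|p\|_2^2+\|q\|_2^2)\langle p,q\rangle$ term from the covariances into the non-positive quantity $-(\|p\|_2^2-\langle p,q\rangle)^2-(\|q\|_2^2-\langle p,q\rangle)^2$. This is exactly the payoff of reusing samples, and the factor $\tfrac{m-1}{m}$ is chosen to make the combination exact.

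Using the identities $\|p\|_3^3-\sum_x p_x^2 q_x=\sum_x p_x^2(p_x-q_x)$ and its $p\leftrightarrow q$ analogue, and dropping the non-positive squares above, the surviving $O(m^3)$ coefficient reduces to $\sum_x(p_x-q_x)^2(p_x+q_x)$. This is bounded by $(\|p\|_\infty+\|q\|_\infty)\|p-q\|_2^2\le 2\sqrt b\,\|p-q\|_2^2$, giving
\[
\var(Z)\;\lesssim\;m^2 b+m^3\sqrt b\,\|p-q\|_2^2.
\]
Plugging in $m=c\sqrt b/\eps^2$ and $\|p-q\|_2^2=O(\eps^2)$ at the boundary bounds $\sqrt{\var(Z)}$ well below the required $\binom{m}{2}\eps^2/8$ for a large enough constant $c$; in the soundness regime where $\|p-q\|_2^2\gg\eps^2$, the gap $|\E[Z]-t|$ also grows linearly in $\|p-q\|_2^2$ and absorbs the growth in the variance. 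The main obstacle will be the variance calculation itself: the case analysis has many subcases and one must verify the exact cancellation of the $\langle p,q\rangle^2$ contributions. Without this cancellation the bound degrades to $O(b^2/\eps^4)$ samples, matching the original Batu et al.\ analysis rather than the sample-optimal bound.
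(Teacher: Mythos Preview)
Your proposal is correct and reaches the same variance bound $\var(Z)\lesssim m^2b+m^3\sqrt{b}\,\|p-q\|_2^2$ as the paper, but via a genuinely different decomposition. The paper rewrites $Z=\tfrac{m-1}{2m}A+\tfrac{1}{2m}B$ with $A=\sum_i[(X_i-Y_i)^2-X_i-Y_i]$ indexed by \emph{domain elements}, then bounds $\var(A)$ through lengthy binomial-moment computations of $\var(A_i)$ and $\cov(A_i,A_j)$; the crucial cancellation appears only at the end, when the $m^3$ part of $\sum_{i\neq j}\cov(A_i,A_j)$ is shown to contain the nonnegative quadratic form $(p-q)^\intercal(pp^\intercal+qq^\intercal)(p-q)$ with a minus sign. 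You instead stay with the \emph{collision-type} decomposition $C_1,C_2,C_3$ and obtain the same cancellation directly and transparently as $-(\|p\|_2^2-\langle p,q\rangle)^2-(\|q\|_2^2-\langle p,q\rangle)^2$; these are in fact the same two squares, since $(p-q)^\intercal pp^\intercal(p-q)=(\|p\|_2^2-\langle p,q\rangle)^2$. Your route is shorter and isolates more cleanly \emph{why} reusing the samples (and the $\tfrac{m-1}{m}$ coefficient) is essential. For the remaining $m^3$ term $\sum_x(p_x-q_x)^2(p_x+q_x)$ you use the $\ell_\infty$ bound $\|p\|_\infty\le\|p\|_2\le\sqrt{b}$, whereas the paper uses Cauchy--Schwarz to get $\|p-q\|_4^2\cdot 2\sqrt{b}$ and then $\|p-q\|_4\le\|p-q\|_2$; both yield $2\sqrt{b}\,\|p-q\|_2^2$. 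Your Chebyshev wrap-up, including the observation that in the soundness regime the gap $|\E[Z]-t|\ge\binom{m}{2}\|p-q\|_2^2/2$ grows with $\|p-q\|_2^2$ and hence dominates the $m^3\sqrt{b}\,\|p-q\|_2^2$ variance term, matches the paper's argument with $\alpha=\|p-q\|_2^2$.
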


\subsection{Analysis of \textsc{Test-Closeness-Collisions}}
Let $X_i, Y_i$ be the number of times we see the element $i$
in each set of samples $S_p$ and $S_q$, respectively. 
The above random variables are distributed as follows:  
$X_i\sim Bin(m,p_i), Y_i\sim Bin(m,q_i)$.   
Note that the statistic $Z$ can be written as
\[
Z=\frac{m-1}{\new{2m}}\sum_{i=1}^n \left[(X_i-Y_i)^2-X_i-Y_i \right]+
\frac{1}{\new{2m}}\sum_{i=1}^n \left[ X_i(X_i-1)+Y_i(Y_i-1) \right] = 
\frac{m-1}{\new{2m}}A+\frac{1}{\new{2m}}B \;,
\]
where $A=\sum_{i=1}^n \left[ (X_i-Y_i)^2-X_i-Y_i \right]$ 
and $B=\sum_{i=1}^n \left[ X_i(X_i-1)+Y_i(Y_i-1) \right]$.
Note that 
$$\Var[Z] \leq 4\cdot \max\left\{\frac{(m-1)^2}{\new{4m^2}}\Var[A],\frac{1}{\new{4m^2}}\Var[B]\right\} \;.$$
Note that $B$ essentially corresponds to the number of collisions within two disjoint sets of samples,
hence we already have an upper bound on its variance. The bulk of the analysis
goes into bounding from above
the variance of $A=\sum_{i=1}^n A_i=\sum_{i=1}^n \left[(X_i-Y_i)^2-X_i-Y_i \right]$. 

\paragraph{Remark.}
The $\ell_2$ collision-based tester we analyze here is closely related to the $\ell_2$-tester of \cite{CDVV14}. 
Specifically, the $A$ term in the expression for $Z$ has the same formula as the $\ell_2$-tester of \cite{CDVV14}. 
However, a key difference is that the statistic of \cite{CDVV14} is Poissonized, which is crucial for its analysis.

\medskip

We now proceed to analyze the collision-based closeness tester.
We start with a simple formula for its expectation:

\new{
\begin{lemma}\label{closeness_expectation}
For the expectation of the statistic $Z$ in the closeness tester, we have: 
\begin{equation} \label{exp}
\mathbb{E}[Z]=\binom{m}{2} \Vert p-q \Vert_2^2 \;.
\end{equation}
\end{lemma}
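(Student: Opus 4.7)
The plan is a direct linearity-of-expectation calculation on the three collision counts $C_1, C_2, C_3$, mirroring the simple formula already obtained for the self-collision statistic in Lemma~\ref{lem:expectation-collisions}.

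First I would handle the self-collision terms. Exactly as in the proof of Lemma~\ref{lem:expectation-collisions}, any two distinct samples drawn i.i.d. from $p$ collide with probability $\sum_k p_k^2 = \|p\|_2^2$, so summing over the $\binom{m}{2}$ pairs gives $\mathbb{E}[C_1] = \binom{m}{2}\|p\|_2^2$. The identical argument applied to the fresh samples from $q$ yields $\mathbb{E}[C_2] = \binom{m}{2}\|q\|_2^2$.

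Next I would compute the expectation of the cross-collision count. Since $S_p$ and $S_q$ are drawn independently, for each of the $m^2$ pairs $(i,j)$ with $i \in [m], j \in [m]$, the corresponding cross-collision indicator $\ell_{ij}$ has
\[
\Pr[\ell_{ij}=1] = \sum_{k=1}^n p_k q_k = \langle p, q\rangle,
\]
so by linearity $\mathbb{E}[C_3] = m^2 \langle p,q\rangle$.

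Finally I would combine the three expectations:
\begin{align*}
\mathbb{E}[Z] &= \mathbb{E}[C_1] + \mathbb{E}[C_2] - \tfrac{m-1}{m}\,\mathbb{E}[C_3] \\
&= \binom{m}{2}\|p\|_2^2 + \binom{m}{2}\|q\|_2^2 - \tfrac{m-1}{m}\cdot m^2 \langle p,q\rangle \\
&= \binom{m}{2}\bigl(\|p\|_2^2 + \|q\|_2^2 - 2\langle p,q\rangle\bigr) = \binom{m}{2}\|p-q\|_2^2,
\end{align*}
using $\tfrac{m-1}{m}\cdot m^2 = m(m-1) = 2\binom{m}{2}$ in the last line. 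This is precisely why the coefficient $\tfrac{m-1}{m}$ was placed in front of $C_3$ in the definition of $Z$: it exactly compensates for the fact that $C_3$ is a sum over all $m^2$ ordered cross-pairs while $C_1$ and $C_2$ sum over only the $\binom{m}{2}$ unordered same-sample pairs. There is no real obstacle here; the step that must be performed carefully is simply tracking the pair counts ($\binom{m}{2}$ versus $m^2$) so that the cross term lands with the correct factor of~$2$ in the expansion of $\|p-q\|_2^2$.
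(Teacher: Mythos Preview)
Your argument is correct and follows exactly the same route as the paper: compute $\mathbb{E}[C_1]=\binom{m}{2}\|p\|_2^2$, $\mathbb{E}[C_2]=\binom{m}{2}\|q\|_2^2$, $\mathbb{E}[C_3]=m^2\langle p,q\rangle$ by linearity of expectation, then combine using $\tfrac{m-1}{m}\cdot m^2 = 2\binom{m}{2}$. The paper just writes these three expectations in a single line rather than spelling them out.
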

\begin{proof}
Viewing $p$ and $q$ as vectors, we have
\[
\mathbb{E}[Z] = \mathbb{E}[C_1 + C_2 - \frac{m-1}{m} \cdot C_3] = \binom{m}{2} (p \cdot p) + \binom{m}{2} (q \cdot q) - \frac{m-1}{m} \cdot m^2 (p \cdot q) = \binom{m}{2} \Vert p-q \Vert_2^2.
\]
\end{proof}}

For the variance, we show the following upper bound:

\begin{lemma}\label{closeness_variance}
For the variance of the statistic $Z$ in the closeness tester, we have:
\[
\var[Z]\leq \new{116}m^2b+\new{16}m^3 \Vert p-q \Vert_4^2 b^{1/2} \;.
\]
\end{lemma}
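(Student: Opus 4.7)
The plan is to use the decomposition $\Var[Z]\le 4\max\{(m-1)^2\Var[A]/(4m^2),\,\Var[B]/(4m^2)\}$ given just above the lemma and bound $\Var[A]$ and $\Var[B]$ separately. For $\Var[B]$, note that $B=2(C_1+C_2)$ with $C_1,C_2$ the independent self-collision counts of $S_p,S_q$; applying Lemma~\ref{lem:variance} to each and using the crude bounds $\|p\|_3^3\le\|p\|_\infty\|p\|_2^2\le b^{3/2}$ (likewise for $q$) yields $\Var[B]=O(m^2 b+m^3 b^{3/2})$. Since $b\le 1$, the resulting contribution $\Var[B]/m^2=O(b+mb^{3/2})$ to $\Var[Z]$ is absorbed by the $O(m^2 b)$ term of the stated bound.

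The main step is $\Var[A]$. Using the identity $(X_i-Y_i)^2-X_i-Y_i=X_i(X_i-1)+Y_i(Y_i-1)-2X_iY_i$ and expanding each factor into indicator variables over samples, I would obtain the symmetric representation
\[
A=\sum_{\alpha\neq\beta}\epsilon_{\alpha\beta}\,\mathbf{1}[s_\alpha=s_\beta],
\]
where $\alpha,\beta$ range over ordered pairs of distinct indices in the combined sample pool of size $2m$, and $\epsilon_{\alpha\beta}=+1$ if samples $\alpha,\beta$ come from the same distribution and $-1$ otherwise. Thus $\Var[A]$ is a sum of $\epsilon_{\alpha\beta}\epsilon_{\alpha'\beta'}\cov(\mathbf{1}[s_\alpha=s_\beta],\mathbf{1}[s_{\alpha'}=s_{\beta'}])$, and crucially the summand vanishes whenever $\{\alpha,\beta\}\cap\{\alpha',\beta'\}=\emptyset$ because the four samples are then mutually independent. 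Only the coincident case (same unordered pair) and the three-sample case (exactly one shared index) survive.

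The coincident case contributes $O(m^2 b)$ since it has $O(m^2)$ terms, each with indicator variance at most $b$. The three-sample case is the source of the $\|p-q\|_4^2$ dependence. For a configuration with shared index $\gamma$ and other distinct indices $\delta,\delta'$, one has $\E[\mathbf{1}[s_\gamma=s_\delta]\mathbf{1}[s_\gamma=s_{\delta'}]]=\sum_i r_{L_\gamma}(i)r_{L_\delta}(i)r_{L_{\delta'}}(i)$ with $r_p=p$ and $r_q=q$. Summing over the eight label triples $(L_\gamma,L_\delta,L_{\delta'})\in\{p,q\}^3$ weighted by $\epsilon_{\gamma\delta}\epsilon_{\gamma\delta'}$ produces the cancellation
\[
\sum_{L_\gamma,L_\delta,L_{\delta'}}\epsilon_{\gamma\delta}\epsilon_{\gamma\delta'}\sum_i r_{L_\gamma}(i)r_{L_\delta}(i)r_{L_{\delta'}}(i)=\sum_i(p_i-q_i)^2(p_i+q_i),
\]
and the parallel signed sum on the $-\E[\mathbf{1}]\E[\mathbf{1}]$ part of the covariance is nonpositive and can only decrease the total. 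Multiplying by the $\Theta(m^3)$ count of ordered triples — and absorbing the $O(m^2 b^{3/2})$ correction arising from the falling-factorial vs.\ $m^3$ counts into the coincident $O(m^2 b)$ bound — the three-sample contribution to $\Var[A]$ is at most $O(m^3\sum_i(p_i-q_i)^2(p_i+q_i))$, which by Cauchy--Schwarz is at most $O(m^3\|p-q\|_4^2\|p+q\|_2)\le O(m^3 b^{1/2}\|p-q\|_4^2)$. Combining with the $\Var[B]$ estimate and tracking constants yields the claimed inequality.

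The main obstacle is precisely this three-sample bookkeeping. Naively each of the eight label triples contributes a cross moment $\sum_i p_i^a q_i^{3-a}$ that is not small even when $p\approx q$, and a crude absolute-value bound on the covariances would introduce an $m^3$ term with no $\|p-q\|_4^2$ factor, ruining the $\eps^{-2}$ rate. It is only after the signed sum over the eight label configurations that the individual moments collapse into the clean $\sum_i(p_i-q_i)^2(p_i+q_i)$ form and thence, via Cauchy--Schwarz, into $\|p-q\|_4^2 b^{1/2}$. This cancellation is precisely the payoff for reusing samples across $C_1,C_2,C_3$ rather than using independent fresh samples for $C_3$ as in~\cite{BFR+:00}.
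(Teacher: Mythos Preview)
Your proposal is correct and reaches the same endpoint as the paper --- the bound $\Var[A]=O(m^2b)+O(m^3)\sum_i(p_i-q_i)^2(p_i+q_i)$, followed by Cauchy--Schwarz against $\|p+q\|_2\le 2b^{1/2}$ --- but the route to that bound on $\Var[A]$ is genuinely different. The paper (in its Proposition~\ref{prop:var-a}) decomposes $A=\sum_i A_i$ \emph{by domain element}, with $A_i=(X_i-Y_i)^2-X_i-Y_i$, and then grinds out $\Var[A_i]$ and $\cov(A_i,A_j)$ via explicit binomial-moment computations; the target expression $\sum_i(p_i-q_i)^2(p_i+q_i)$ emerges only after several pages of algebra. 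Your decomposition is \emph{by sample index}: writing $A=\sum_{\alpha\neq\beta}\epsilon_{\alpha\beta}\mathbf 1[s_\alpha=s_\beta]$ and exploiting independence of distinct samples kills all non-overlapping covariance terms immediately, and the signed sum over the eight label triples $(L_\gamma,L_\delta,L_{\delta'})$ makes the cancellation to $(p_i-q_i)^2(p_i+q_i)$ transparent rather than accidental. Your observation that the $-\E[\cdot]\E[\cdot]$ part of the three-sample covariance, summed with signs, equals $-[p\cdot(p-q)]^2-[q\cdot(p-q)]^2\le 0$ is the clean analogue of the paper's laborious verification that the $m^3$ coefficient in $\sum_{i\neq j}\cov(A_i,A_j)$ is bounded by $4\sum_i(p_i-q_i)^2(p_i+q_i)$. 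The two approaches yield the same leading constant on the $m^3$ term (up to slack the paper introduces along the way), and your handling of the $O(m^2)$ count corrections is adequate since each individual cross moment is at most $b$. In short: your argument is a more conceptual substitute for the paper's brute-force coordinate calculation, and it makes explicit why reusing samples (so that the signs are present in a single sum) is exactly what produces the cancellation.
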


To prove this lemma, we will use the following proposition, whose proof is deferred to the following subsection.

\begin{proposition} \label{prop:var-a}
We have that $\var[A] \leq 100m^2b+8  m^3\sum_i (p_i-q_i) (p_i^2-q_i^2)$.
\end{proposition}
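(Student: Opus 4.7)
My plan is to compute $\var[A]$ via a collision-counting representation. Using the identities $\sum_i X_i^2 = m + 2C_1$, $\sum_i Y_i^2 = m + 2C_2$, and $\sum_i X_i Y_i = C_3$, one checks directly from the definition $A_i = (X_i - Y_i)^2 - X_i - Y_i$ that $A = 2(C_1 + C_2 - C_3)$. Since $S_p$ and $S_q$ are independent, $\cov[C_1, C_2] = 0$, and therefore
$$\var[A] \;=\; 4\bigl(\var[C_1] + \var[C_2] + \var[C_3] - 2\cov[C_1, C_3] - 2\cov[C_2, C_3]\bigr).$$
The self-collision variances $\var[C_1], \var[C_2]$ are already controlled by \lemref{lem:variance}, so the remaining work is to evaluate $\var[C_3]$ and the two cross-covariances exactly.

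I would compute each of these three quantities by the same style of case analysis as in \lemref{lem:variance}: expand the relevant product of collision indicators into a sum over tuples of sample indices, and split the sum according to which indices coincide. Since the samples are i.i.d., every sub-expectation collapses to a simple moment such as $p\cdot q$, $\sum_i p_i^2 q_i$, $\sum_i p_i q_i(p_i+q_i)$, or $\|p\|_2^2 (p\cdot q)$. Counting the number of tuples of each type then yields closed-form formulas for $\var[C_3]$, $\cov[C_1, C_3]$, and $\cov[C_2, C_3]$ with explicit $m^3$-scale and $m^2$-scale contributions.

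The decisive step is to add the $m^3$-scale coefficients across all five terms and spot a telescoping cancellation: a $-2m^3 (p\cdot q)^2$ contribution coming from $\var[C_3]$ pairs against the $m^3 \|p\|_2^2(p\cdot q)$ and $m^3 \|q\|_2^2(p\cdot q)$ contributions of the cross-covariances, combining into the nonpositive expression $-(\|p\|_2^2 - p\cdot q)^2 - (\|q\|_2^2 - p\cdot q)^2$, while the remaining positive $m^3$-contributions collapse to $\sum_i (p_i - q_i)(p_i^2 - q_i^2)$. Dropping the two squares then produces the advertised $8 m^3 \sum_i (p_i - q_i)(p_i^2 - q_i^2)$ term. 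The leftover $m^2$-scale terms are absorbed into the $100 m^2 b$ summand via $\|p\|_2^2, \|q\|_2^2 \leq b$ and elementary inequalities such as $\sum_i p_i q_i (p_i + q_i) \leq 2b$.

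The main obstacle is not any individual algebraic step but the bookkeeping: one has to track the precise signs of the leading coefficients, particularly the negative $-2m^3(p\cdot q)^2$ inside $\var[C_3]$, because only with these signs available do the $\var[C_3]$ and cross-covariance contributions fit together into the squared-difference cancellation above. A cruder treatment of $\var[C_3]$ that discards the negative part would leave surviving $m^3(p\cdot q)$-scale contributions that cannot be absorbed into $\sum_i (p_i - q_i)^2 (p_i + q_i)$, yielding a suboptimal $\eps$-dependence in the final sample complexity --- the direct analogue of the $-\|p\|_2^4$ subtlety already flagged for the uniformity tester.
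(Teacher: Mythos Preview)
Your proposal is correct and takes a genuinely different route from the paper. The paper decomposes elementwise, writing $\var[A]=\sum_i \var[A_i]+\sum_{i\neq j}\cov[A_i,A_j]$, and then computes each term through long moment calculations with the binomial variables $X_i,Y_i$ (several pages of algebra), finally extracting the $m^3$-coefficient $4(p_i-q_i)^2(p_i+q_i)$ from $\var[A_i]$ and a matching contribution from the covariance sum. Your collision-based decomposition $A=2(C_1+C_2-C_3)$ is cleaner: it immediately recycles \lemref{lem:variance} for $\var[C_1],\var[C_2]$ and leaves only $\var[C_3]$ and the two covariances $\cov[C_k,C_3]$, each of which is a short indicator-sum computation with at most four tuple types. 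The $m^3$-coefficients you describe are exactly right: adding them produces
\[
\sum_k (p_k-q_k)(p_k^2-q_k^2)\;-\;\bigl[(\|p\|_2^2-p\!\cdot\! q)^2+(\|q\|_2^2-p\!\cdot\! q)^2\bigr],
\]
so after dropping the two nonnegative squares and multiplying by $4$ you even get a constant $4$ rather than the paper's $8$ in front of the $m^3$ term. The $m^2$-scale leftovers (all of the form $\|p\|_2^2$, $\|q\|_2^2$, $p\cdot q$, $\sum_i p_iq_i(p_i+q_i)$, $(p\cdot q)^2$, etc.) are each $\le 2b$, so absorbing them into $100m^2 b$ is routine. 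In short, both arguments hinge on the same cancellation at the $m^3$ scale, but your decomposition reaches it with considerably less bookkeeping.
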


\begin{proof}[Proof of Lemma~\ref{closeness_variance}]
Recall that by Lemma \ref{lem:variance} we have 
\[
\var[B]\leq \new{4}m^2 (\Vert p \Vert_2^2 + \Vert q   \Vert_2^2)+\new{4}m^3 (\Vert p \Vert_3^3-\Vert p \Vert_2^4+\Vert q \Vert_3^3-\Vert q \Vert_2^4)\;.
\]
Combined with Proposition~\ref{prop:var-a}, we obtain:
\begin{align*}
\var[Z]&\leq 4 \cdot \max\left\{\frac{(m-1)^2}{4m^2}\var[A], \frac{1}{4m^2}\var[B]\right\}\\
&\leq \max \Large\{ 100m^2b+8  m^3 \sum_i (p_i-q_i) (p_i^2-q_i^2), \\
&\qquad\qquad \new{4}(\Vert p \Vert_2^2 + \Vert q   \Vert_2^2)+\new{4}m (\Vert p \Vert_3^3-\Vert p \Vert_2^4+\Vert q \Vert_3^3-\Vert q \Vert_2^4) \Large\} \;.
\end{align*}
\new{The second term in the max statement is at most $16mb$. Thus, we have}
\begin{align*}
\var[Z] 
&\leq \new{116 (m-1)^2 b+8m(m-1)^2\sum_i (p_i-q_i) (p_i^2-q_i^2)}\\
&\leq 116m^2b+8m^3\sum_i (p_i-q_i)^2 (p_i+q_i) \\
&\leq 116m^2b+8m^3\sqrt{\sum_i (p_i-q_i)^4 \sum_i (p_i+q_i)^2} & \text{(by the Cauchy-Schwarz inequality)} \\
&\leq 116m^2b+16m^3 \Vert p-q \Vert_4^2 b^{1/2} & \text{(since $\sum_i (p_i+q_i)^2\leq 4b$)} \;.
\end{align*}
\end{proof}

\subsection{Proof of Theorem \ref{thm:collisions-closeness}}
\new{
By Lemma~\ref{closeness_variance}, we have that
\[
\var[Z]\leq 116m^2b+16m^3 \Vert p-q \Vert_4^2 b^{1/2} \leq 116m^2b+16m^3 \Vert p-q \Vert_2^2 b^{1/2}.
\]
We wish to show we can distinguish the completeness case (i.e., $\Vert p-q\Vert_2 \leq \epsilon /2$) from the soundness case (i.e., $\Vert p-q\Vert_2 \geq \epsilon$). Set $\alpha = \Vert p-q\Vert_2^2$. Then we are promised that either $\alpha \geq \epsilon^2$ or $\alpha \leq \epsilon^2/4$. Recall we chose $t=\new{\frac{\binom{m}{2}\epsilon^2}{2}}$ and that Lemma~\ref{closeness_expectation} says that $\mathbb{E}[Z] = \binom{m}{2} \alpha$.

Since
\[
\mathbb{E}[Z|\text{completeness case}] \leq t \leq \mathbb{E}[Z|\text{soundness case}],
\]
the only way we fail to distinguish the completeness and soundness cases is if $Z$ deviates from its expectation additively by at least
\[
|t-\mathbb{E}[Z]| = \left| \frac{\binom{m}{2}\epsilon^2}{2} - \binom{m}{2}\alpha \right| \geq \binom{m}{2} \max\{\alpha,\epsilon^2\} / 4,
\]
where the last inequality follows by the promise on $\alpha$ in the completeness and soundness cases.\footnote{In the completeness case where $\alpha \leq \varepsilon^2/4$ and $\mathbb{E}[Z]=\binom{m}{2}\alpha$, $Z$ has to deviate by at least $\binom{m}{2}\varepsilon^2/4 \geq \binom{m} \epsilon^2 \alpha$ to cross the threshold $t=\binom{m}{2}\varepsilon^2/2$. In the soundness case where $\alpha \geq \varepsilon^2$, $Z$ has to deviate by at least $\binom{m}{2}\alpha/2 \geq \epsilon / 2$ to cross the threshold $t$.} By Chebyshev's inequality, the probability this happens is at most
\begin{align*}
\Pr[\;|Z - \mathbb{E}[Z]| \geq \binom{m}{2} \max\{\alpha,\epsilon^2\} / 4\;] &\leq \frac{\var[Z]}{[t-\mathbb{E}[Z]]^2} \leq \frac{116m^2b+16m^3 \alpha b^{1/2}}{[\binom{m}{2} \max\{\alpha,\epsilon^2\} / 4]^2} \\
&\leq \frac{32768 \cdot b}{m^2 \epsilon^4} + \frac{4096 \cdot b^{1/2}}{m} \cdot \min\left\{ \frac{1}{\alpha}, \frac{\alpha}{\epsilon^4} \right\} \\
&\leq \frac{32768 \cdot b}{m^2 \epsilon^4} + \frac{4096 \cdot b^{1/2}}{m \epsilon^2},
\end{align*}
where we simplified using the assumption that $m \geq 2$. Thus, if we set $m=O(\frac{b^{1/2}}{\epsilon^2})$, we get a constant probability of error in both cases as desired.%
} \qed

\subsection{Proof of Proposition~\ref{prop:var-a}}
Recall that  $A=\sum_{i=1}^n A_i=\sum_{i=1}^n \left[(X_i-Y_i)^2-X_i-Y_i \right]$,
hence $\var(A)=\sum_{i=1}^n \var(A_i)+\sum_{i\not= j} \cov(A_i,A_j)$.
We proceed to bound from above the individual variances and covariances
via a sequence of elementary but quite tedious calculations.

\subsubsection{Bounding $\var(A_i)$:}
Since 
\[
A_i=(X_i-Y_i)^2-X_i-Y_i=X_i^2+Y_i^2-2X_iY_i-X_i-Y_i \;,
\]
we can write:
\begin{align*}
\var(A_i)
&=\var(X_i^2)+\var(Y_i^2)+4\var(X_iY_i)+\var(X_i)+\var(Y_i)\\
&+2 \cdot [-2\cov(X_i^2,X_iY_i)-\cov(X_i^2,X_i)-2\cov(Y_i^2,X_iY_i)-\cov(Y_i^2,Y_i)\\
&+2\cov(X_iY_i,X_i)+ 2 \cov(X_iY_i,Y_i)] \;.
\end{align*}
We proceed to calculate the individual quantities:
\begin{itemize}
\item[(a)]\begin{align*}
\cov(X_i^2,X_i)&= \sum_{r,s,t\in[m]} \cov([\sigma_r=\sigma_s=i],[\sigma_t=i])\\
&= \sum_{r\in[m]} \cov([\sigma_r=i],[\sigma_r=i]) + 2\sum_{r,s\in[m],\; r\neq s} \cov([\sigma_r=\sigma_s=i],[\sigma_r=i])\\
&=mp_i(1-p_i)+2(m^2-m)(\mathbb{E}[[\sigma_r=\sigma_s=i]\cdot [\sigma_r=i]]-p_i^2p_i)\\
&=mp_i(1-p_i)+2(m^2-m)(p_i^2-p_i^3)\\
&=mp_i(1-p_i)[1+2p_i(m-1)]\\
&=mp_i(1-p_i)[1-2p_i+2p_im]\\
&=mp_i(1-p_i)(1-2p_i)+2m^2p_i^2(1-p_i) \;.
\end{align*}

\item[(b)] \[
\cov(X_i^2,X_iY_i)=\mathbb{E}[X_i^3Y_i]-\mathbb{E}[X_i^2]\cdot \mathbb{E}[X_iY_i]=\cov(X_i^2,X_i)\cdot \mathbb{E}[Y_i]={m^2}p_iq_i(1-p_i)(1-2p_i)+2{m^3}p_i^2q_i(1-p_i) \;.
\]

\item[(c)] \[
\cov(X_i,X_iY_i)=\var(X_i)\cdot \mathbb{E}[Y_i]={m^2}p_i(1-p_i)q_i \;.
\]

\item[(d)]
\begin{align*}
\var(X_i^2)=&\mathbb{E}[X_i^4]-(\mathbb{E}[X_i^2])^2\\
=& mp_i(1-7p_i+7mp_i+12p_i^2-18mp_i^2+6m^2p_i^2-6p_i^3\\
&+11mp_i^3-6m^2p_i^3+m^3p_i^3)-(mp_i-mp_i^2+m^2p_i^2)^2\\
=&mp_i-7mp_i^2+7m^2p_i^2+12mp_i^3-18m^2p_i^3+6m^3p_i^3-6mp_i^4+11m^2p_i^4-6m^3p_i^4+m^4p_i^4 \\
&- (m^2p_i^2+m^2p_i^4+m^4p_i^4-2m^2p_i^3+2m^3p_i^3-2m^3p_i^4)\\
=& mp_i-7mp_i^2+6m^2p_i^2+12mp_i^3-16m^2p_i^3+4m^3p_i^3-6mp_i^4+10m^2p_i^4-4m^3p_i^4 \\
=& mp_i-7mp_i^2+6m^2p_i^2+{12mp_i^3-6mp_i^4}-16m^2p_i^3+4m^3p_i^3+10m^2p_i^4-4m^3p_i^4 \;.
\end{align*}

\item[(e)]
\begin{align*}
\var(X_iY_i)& =\mathbb{E}[X_i^2Y_i^2]-(\mathbb{E}[X_iY_i])^2=\mathbb{E}[X_i^2]\mathbb{E}[Y_i^2]-(\mathbb{E}[X_i]\mathbb{E}[Y_i])^2\\
&= (mp_i-mp_i^2+m^2p_i^2)\cdot (mq_i-mq_i^2+m^2q_i^2)-m^4p_i^2q_i^2\\
&= m^2p_iq_i +m^2p_i^2q_i^2-m^2(p_iq_i^2+p_i^2q_i)+m^3(p_iq_i^2+p_i^2q_i)-2m^3p_i^2q_i^2 \;.
\end{align*}
\end{itemize}
So, we get:
\begin{align*}
\var(A_i) &= mp_i-7mp_i^2+6m^2p_i^2+{12mp_i^3-6mp_i^4}-16m^2p_i^3+4m^3p_i^3+10m^2p_i^4-4m^3p_i^4\\ 
&+mq_i-7mq_i^2+6m^2q_i^2+6mq_i^3-16m^2q_i^3+4m^3q_i^3+10m^2q_i^4-4m^3q_i^4 \\
&+4(m^2(p_iq_i+p_i^2q_i^2-p_iq_i^2-p_i^2q_i)+m^3(p_iq_i^2+p_i^2q_i)-2m^3p_i^2q_i^2)\\
&+mp_i(1-p_i)+mq_i(1-q_i)-{4}( {m^2}p_i(1-p_i)(1-2p_i)+{2m^3}p_i^2(1-p_i))q_i\\
&-{2}(mp_i(1-p_i)(1-2p_i)+{4}m^2p_i^2(1-p_i))-{4}( {m^2}q_i(1-q_i)(1-2q_i)+{2m^3}q_i^2(1-q_i))p_i\\
&-{2}mq_i(1-q_i)(1-2q_i)- {4}m^2q_i^2(1-q_i)+ {4m^2}p_i(1-p_i)q_i+ {4m^2}q_i(1-q_i)p_i\\
&=m[p_i-7p_i^2+ {12p_i^3-6p_i^4}+q_i-7q_i^2+ {12q_i^3-6q_i^4}+p_i-p_i^2+q_i-q_i^2\\
&\;\;- {2}p_i(1-p_i)(1-2p_i)- {2}q_i(1-q_i)(1-2q_i)]\\
&+m^2[ {-4p_iq_i(1-p_i)(1-2p_i)-4p_iq_i(1-q_i)(1-2q_i)+4p_iq_i(2-p_i-q_i)}\\
&\;\;+6p_i^2-16p_i^3+10p_i^4+6q_i^2-16q_i^3+10q_i^4+4p_iq_i(1+p_iq_i-p_i-q_i)\\
&\;\;- {4}p_i^2+ {4}p_i^3- {4}q_i^2+ {4}q_i^3]\\
&+m^3[4p_i^3-4p_i^4+4q_i^3-4q_i^4+4p_iq_i(p_i+q_i)-8p_i^2q_i^2 {-8p_1^2q_i-8p_iq_i^2+8p_1^3q_i+8p_iq_i^3}]\\
&= m [-2 p_i^2 + 8 p_i^3 - 6 p_i^4 - 2 q_i^2 + 8 q_i^3 - 6 q_i^4] \\
&+ m^2 [2(p_i+q_i)^2 - 12 p_i^3 + 10 p_i^4 + 4 p_i^2 q_i - 8 p_i^3 q_i + 4 p_i q_i^2 + 4 p_i^2 q_i^2 - 12 q_i^3 - 8 p_i q_i^3 + 10 q_i^4)] \\
&+ 4m^3 (p_i-q_i)^2 [p_i(1-p_i) + q_i(1-q_i)] \\
&\leq 8m (p_i^3 +q_i^3) + 12 m^2 (p_i+q_i)^2 + 4m^3 (p_i-q_i)^2 \left(p_i+q_i\right)  \\
& \leq 20m^2 (p_i+ q_i)^2 + 4m^3 (p_i-q_i)^2 (p_i+q_i) \;.\\
\end{align*}

\subsection{Bounding the Covariances}
It suffices to show that the covariances of $A_i$ and $A_j$, for $i \neq j$, are appropriately bounded from above. 
Let $i \neq j$. Note that if $\sigma_r$ is the result of sample $k$, we have:
\[
\cov(X_i,X_j) = \sum_{r,u\in[m]} \cov([\sigma_r=i],[\sigma_u=j]) = \sum_{r\in[m]} \cov([\sigma_r=i],[\sigma_r=j]) = -m p_i p_j \;.
\]
Similarly,
\begin{align*}
\cov(X_i^2,X_j) &= \sum_{r,s,t\in[m]} \cov([\sigma_r=\sigma_s=i],[\sigma_t=j]) \\
&= \sum_{r\in[m]} \cov([\sigma_r=i],[\sigma_r=j]) + 2\sum_{r,s\in[m],\; r\neq s} \cov([\sigma_r=\sigma_s=i],[\sigma_r=j]) \\
&= -m p_i p_j -2m(m-1) p_i^2 p_j.
\end{align*}

Similarly,
\begin{align*}
\cov(X_i^2,X_j^2) =& \sum_{r,s,t,u\in[m]} \cov([\sigma_r=\sigma_s=i],[\sigma_t=\sigma_u=j]) \\
=& \; 4\sum_{\text{unique } r,s,u\in[m]} \cov([\sigma_r=\sigma_s=i],[\sigma_r=\sigma_u=j]) \\
&+ 2\sum_{\text{unique } r,s\in[m]} \cov([\sigma_r=\sigma_s=i],[\sigma_r=\sigma_s=j]) \\
&+ 2\sum_{\text{unique } r,s\in[m]} \cov([\sigma_r=\sigma_s=i],[\sigma_r=j]) \\
&+ 2\sum_{\text{unique } r,t\in[m]} \cov([\sigma_r=i],[\sigma_r=\sigma_t=j]) \\
&+ \sum_{r\in[m]} \cov([\sigma_r=i],[\sigma_r=j]) \\
&= -mp_ip_j -2m(m-1)(p_i^2 p_j + p_i p_j^2 + p_i^2 p_j^2) -4m(m-1)(m-2)p_i^2 p_j^2 \\
&= -mp_ip_j -2m(m-1)(p_i^2 p_j + p_i p_j^2) -2m(m-1)(2m-3) p_i^2 p_j^2.
\end{align*}
And,
\begin{align*}
\cov(X_i Y_i,X_j Y_j) &= \E[X_i Y_i X_j Y_j] - \E[X_i Y_i] \E[X_j Y_j] \\
&= \E[X_i X_j] \E[Y_i Y_j] - \E[X_i]\E[Y_i]\E[X_j]\E[Y_j] \\
&= (\cov(X_i,X_j)+\E[X_i]\E[X_j]) \cdot (\cov(Y_i,Y_j)+\E[Y_i]\E[Y_j])-\E[X_i]\E[X_j]\E[Y_i]\E[Y_j] \\
&= (m^2-2m^3)p_i p_j q_i q_j.
\end{align*}
Also,
\begin{align*}
\cov(X_i Y_i,X_j) &= \E[X_i Y_i X_j] - \E[X_i Y_i] \E[X_j] \\
&= \E[X_i X_j] \E[Y_i] - \E[X_i]\E[Y_i]\E[X_j] \\
&= (\cov(X_i,X_j)+\E[X_i]\E[X_j]) \cdot \E[Y_i]-\E[X_i]\E[X_j]\E[Y_i] \\
&= \cov(X_i,X_j) \E[Y_i] \;.
\end{align*}
Similar equations hold if we swap $i$ and $j$ and/or we swap $X$ and $Y$. 
Because covariance is bilinear, this gives us all the information 
we need in order to exactly compute $\cov(A_i, A_j)$.  
In particular, by setting $W_i=X_i-Y_i$, we have:
\begin{align*}
\cov(A_i,A_j) &= \cov(W_i^2-X_i-Y_i,W_j^2-X_j-Y_j)\\
&=\cov(X_i,X_j)+\cov(Y_i,Y_j)+ \cov(X_i,Y_j)+ \cov(X_j,Y_i) - \cov(W_i^2,X_j)\\
&\;\;\;\;- \cov(W_i^2,Y_j) - \cov(W_j^2,X_i)- \cov(W_j^2,Y_i)+\cov(W_i^2,W_j^2) \;.
\end{align*}
For the summands we have:
\begin{itemize}
\item[(a)]
\begin{align*}
\cov(W_i^2,X_j)&=\cov((X_i-Y_i)^2,X_j)=\cov(X_i^2,X_j)-2\cov(X_iY_i,X_j)\\
&=-mp_ip_j-2m(m-1)p_i^2p_j+2m^2p_ip_jq_i\\
&=-mp_ip_j(1-2p_i)+2m^2p_ip_j(q_i-p_i) \;.
\end{align*}

\item[(b)] \begin{align*}
\cov(W_i^2,Y_j)=-mq_iq_j(1-2q_i)+2m^2q_iq_j(p_i-q_i) \;.
\end{align*}

\item[(c)] \begin{align*}
\cov(W_j^2,X_i)=-mp_ip_j(1-2p_j)+2m^2p_ip_j(q_j-p_j) \;.
\end{align*}

\item[(d)] \begin{align*}
\cov(W_j^2,Y_i)=-mq_iq_j(1-2q_j)+2m^2q_iq_j(p_j-q_j) \;.
\end{align*}

\item[(e)] \begin{align*}
\cov(W_i^2,W_j^2)=&\cov(X_i^2,X_j^2)+\cov(Y_i^2,Y_j^2)+4\cov(X_iY_i,X_jY_j)\\
&-2\cov(X_i^2,X_jY_j)-2\cov(X_j^2,X_iY_i)-2\cov(Y_i^2,X_jY_j)-2\cov(Y_j^2,X_iY_i)\\
=&\cov(X_i^2,X_j^2)+\cov(Y_i^2,Y_j^2)+4\cov(X_iY_i,X_jY_j)\\
&-2\cov(X_i^2,X_j)\mathbb{E}[Y_j]-2\cov(X_j^2,X_i)\mathbb{E}[Y_i]\\
&-2\cov(Y_i^2,Y_j)\mathbb{E}[X_j]-2\cov(Y_j^2,Y_i)\mathbb{E}[X_i]\\
=&-mp_ip_j -2m(m-1)(p_i^2 p_j + p_i p_j^2) -2m(m-1)(2m-3) p_i^2 p_j^2 \\
&-mq_iq_j -2m(m-1)(q_i^2 q_j + q_i q_j^2) -2m(m-1)(2m-3) q_i^2 q_j^2\\
&+4(m^2-2m^3)p_i p_j q_i q_j +2m^2 p_i p_jq_j +4m^2(m-1) p_i^2 p_jq_j\\
&+2m^2 p_i p_jq_i +4m^2(m-1) p_j^2 p_iq_i+2m^2 q_i q_jp_j +4m^2(m-1) q_i^2 q_jp_j\\
&+2m^2 q_i q_jp_i +4m^2(m-1) q_j^2 q_ip_i \;.
\end{align*}
\end{itemize}

By substituting, we get:
\begin{align*}
\cov(A_i,A_j) &=\cov(X_i,X_j)+\cov(Y_i,Y_j) - \cov(W_i^2,X_j)\\
&\;\;\;\;- \cov(W_i^2,Y_j) - \cov(W_j^2,X_i)- \cov(W_j^2,Y_i)+\cov(W_i^2,W_j^2)\\
=&-m(p_ip_j+q_iq_j)\\
&+mp_ip_j(1-2p_i)-2m^2p_ip_j(q_i-p_i)+mq_iq_j(1-2q_i)-2m^2q_iq_j(p_i-q_i)\\
&+mp_ip_j(1-2p_j)-2m^2p_ip_j(q_j-p_j)+mq_iq_j(1-2q_j)-2m^2q_iq_j(p_j-q_j)\\
&+\cov(W_i^2,W_j^2)\\
=&-2m^2[p_ip_j(q_i+q_j)+q_iq_j(p_i+p_j)]+2m^2[p_ip_j(q_i+q_j)+q_iq_j(p_i+p_j)]\\
&-2m(m-1)(2m-3) p_i^2 p_j^2-2m(m-1)(2m-3) q_i^2 q_j^2\\
&+4(m^2-2m^3)p_i p_j q_i q_j+4m^2(m-1)(p_iq_j+p_jq_i)(p_ip_j+q_iq_j)\\
=&-6m(p_i^2p_j^2+q_i^2q_j^2)\\
&+m^2[10(p_i^2p_j^2+q_i^2q_j^2)+4p_i p_j q_i q_j-4(p_iq_j+p_jq_i)(p_ip_j+q_iq_j)]\\
&-m^3[4(p_i^2p_j^2+q_i^2q_j^2)+8p_i p_j q_i q_j-4(p_iq_j+p_jq_i)(p_ip_j+q_iq_j)]\\
=&-6m(p_i^2p_j^2+q_i^2q_j^2)\\
&+2m^2[5(p_i^2p_j^2+q_i^2q_j^2)+2p_i p_j q_i q_j-2(p_iq_j+p_jq_i)(p_ip_j+q_iq_j)]\\
&-4m^3[(p_ip_j+q_iq_j)^2-(p_iq_j+p_jq_i)(p_ip_j+q_iq_j)] \;.
\end{align*}


In summary,
\begin{align*}
\cov(A_i,A_j) = &-6 m (p_i^2 p_j^2 + q_i^2 q_j^2) \\
&+2m^2 [(5p_i^2 p_j^2 + 5q_i^2 q_j^2) -6p_i p_j q_i q_j  -2p_iq_i(p_j - q_j)^2 - 2p_jq_j(p_i - q_i)^2] \\
&-4 m^3 (p_i - q_i) (p_j - q_j) (p_i p_j + q_i q_j) \;.
\end{align*}

The total contribution of the covariances to the variance for all $i \neq j$ is $\sum_{i \neq j} \cov(A_i, A_j)$. 
We consider the coefficients on each of the powers of $m$ separately. We have:
\begin{align*}
[m^3]\sum_{i \neq j} \cov(A_i, A_j) &= - 4\sum_{i \neq j} (p_i-q_i)(p_j-q_j)(p_ip_j + q_iq_j) \\
&= 4\sum_i (p_i-q_i)^2 (p_i^2+q_i^2) - 4\sum_{i, j} (p_i-q_i)(p_j-q_j)(p_ip_j + q_iq_j) \\
&\leq 4\sum_i (p_i-q_i)^2 (p_i+q_i) - 4\sum_{i, j} (p_i-q_i)(p_j-q_j)(p_ip_j + q_iq_j) \\
&= 4\sum_i (p_i-q_i)^2 (p_i+q_i) -4(p-q)^\intercal (pp^\intercal + qq^\intercal) (p-q) \\
&\leq 4\sum_i (p_i-q_i)^2 (p_i+q_i).
\end{align*}

Also, $[m]\sum_{i \neq j} \cov(A_i, A_j) \leq 0$.

Finally, we have
\begin{align*}
[m^2]\sum_{i \neq j} \cov(A_i,A_j) &= 2 \sum_{i \neq j} [(5p_i^2 p_j^2 + 5q_i^2 q_j^2) -6p_i p_j q_i q_j  -2p_iq_i(p_j - q_j)^2 - 2p_jq_j(p_i - q_i)^2] \\
&\leq 10\sum_{i \neq j} (p_i^2 p_j^2 + q_i^2 q_j^2) \\
&\leq 10\sum_{i, j} (p_i^2 p_j^2 + q_i^2 q_j^2) \\
&= 10[p^\intercal (p p^\intercal) p + q^\intercal (q q^\intercal) q] \\
&=10[(p^\intercal p) (p^\intercal p) + (q^\intercal q) (q^\intercal q)] \\
&= 10\|p\|_2^4 + 10\|q\|_2^4 \\
&\leq 20 b^2  \leq 20b.
\end{align*}

\subsection{Completing the Proof}

\begin{align*}
\var[A]=&\sum_{i=1}^n \var[A_i]+\sum_{i\not= j} \cov(A_i,A_j) \\
\leq & \sum_{i=1}^n 80m^2 \left(\frac{p_i+ q_i}{2}\right)^2 + 4m^3 (p_i-q_i)^2 (p_i+q_i)\\
&+20m^2b + 4m^3\sum_i (p_i-q_i)^2 (p_i+q_i)\\
\leq &100m^2b+8  m^3\sum_i (p_i-q_i) (p_i^2-q_i^2) \;.
\end{align*}
\qed

\bibliographystyle{alpha}


\bibliography{allrefs}

\end{document}